\documentclass[10pt,pra,aps,groupaddress,twocolumn,letter,nopacs,nofootinbib,longbibliography,notitlepage]{revtex4-2}

\usepackage{graphicx} 
\usepackage{booktabs}
\usepackage{multirow}
\usepackage{array}
\usepackage{makecell}
\usepackage{amsmath}
\usepackage{amssymb}
\usepackage{mathtools}
\usepackage{hyperref}
\usepackage[capitalise]{cleveref}
\usepackage{amsthm}
\usepackage{braket}
\usepackage{tikz}
\usepackage[ruled,vlined]{algorithm2e}
\usepackage{tabularx}
\usepackage{float}

\newtheorem{prop}{Proposition}
\newcommand{\appref}[1]{Appendix~\ref{#1}}

\begin{document}

\title{Vine Codes: Low-Overhead Quantum LDPC Codes on a Planar Square Grid}

\author{Georgia M. Nixon}
\email{georgia.nixon@sydney.edu.au}

\author{Campbell K. McLauchlan}
\email{campbell.mclauchlan@sydney.edu.au}

\author{Charles C. L. van Rest}
\affiliation{School of Physics, The University of Sydney, Sydney, New South Wales 2006, Australia}

\begin{abstract}
The surface code is a promising route towards large-scale quantum computing, requiring only nearest-neighbour gates amenable to superconducting hardware. However, surface codes incur large  qubit overheads. Novel quantum low-density parity check (qLDPC) codes promise to reduce overheads, but require long-range connections that are difficult to achieve on superconducting platforms.
Here, we introduce  ``Vine Codes" -- qLDPC codes that are implementable on a planar square grid through nearest-neighbour, two-qubit gates native to superconducting platforms (iSWAP and CZ). Our approach generalises the recently introduced ``Directional Codes"~\cite{Geher2025Directional} which are constrained to a torus. In contrast, vine codes have open boundary conditions constructed with the aid of routing qubits. 
We perform extensive numeric searches and find promising candidate vine codes, e.g. $[[121,4,6]]$, $[[221,6,7]]$, and $[[234,9,6]]$ codes. 
We verify the circuit distances and show that data and measure qubits required can be reduced by up to $\sim28\%$ relative to the surface code at a circuit distance of $7$. Even including routing qubits, vine codes require fewer total qubits than the surface code (e.g. $\sim18\%$ reduction at circuit distance 10) and benefits are expected to increase at higher distances. 
We perform circuit-level noise simulations to demonstrate that under a realistic noise model and at a near-term noise rate of $10^{-3}$, vine codes can perform better than the surface code while using fewer qubits. 
We give an exhaustive list of all unique vine codes up to stabiliser-weight $9$.
We additionally introduce ``Flip-Vine Codes" which possess single-qubit transversal Clifford gates useful for fault-tolerant logic and magic state cultivation. 
We furthermore construct examples of generalised open boundaries for vine codes that go beyond the familiar $X$/$Z$ boundaries of the surface and tile codes.
\end{abstract}

\maketitle
\section{Introduction}

\begin{figure*}[t]
    \centering
    \includegraphics{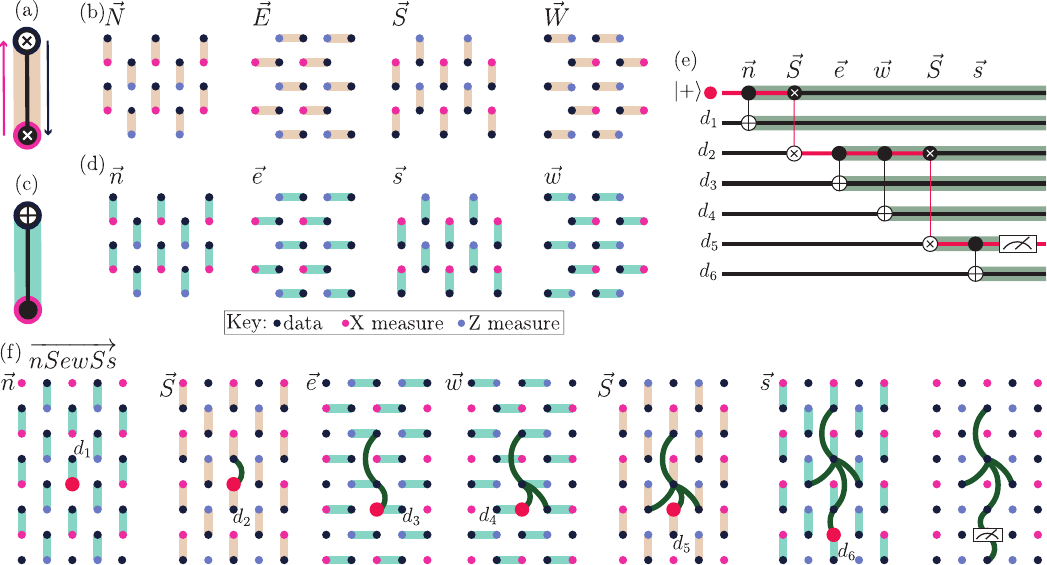}
    \caption{
    Vine codes are defined with qubits on a square lattice where each measure qubit has four nearest neighbour data qubits. 
    \textbf{(a)}-\textbf{(b)} The uppercase steps $\vec{N}$, $\vec{E}$, $\vec{S}$, $\vec{W}$ entangle pairs of data and measure qubits via a CXSWAP gate (see (a)) following the data-measure qubit pairings shown in (b). The labels $\vec{N}$, $\vec{E}$, $\vec{S}$ and $\vec{W}$ refer to cardinal directions that mirror the vector from the measure to the data qubit in each case. A pink $X$-type measure qubit will control the CXSWAP as in (a), while a blue $Z$-type measure qubit will be the target of the CXSWAP. 
    \textbf{(c)}-\textbf{(d)} The lowercase steps $\vec{n}$, $\vec{e}$, $\vec{s}$, $\vec{w}$ entangle pairs of data and measure qubits via a CX gate (see (c)), where now no qubits move. The data-measure qubit pairings for each instruction are shown in (d), following the same pattern as (c). 
    \textbf{(e)} The qubit register showing the dynamic position of the red $X$-type measure qubit for the step sequence $\overrightarrow{nSewSs}$. Part of the $Z$-error detecting region for this stabiliser is highlighted in green.
    \textbf{(f)} $\overrightarrow{nSewSs}$ is a step sequence that encodes a valid vine code on the layout shown (alternating rows of $X$- and $Z$-type measure qubits). We show the movement of an $X$ measure qubit in red, and label the data qubits it entangles with $d_i$ in each step $i$. The support of the stabiliser is shown by the data qubits connected by the green vine as indicated.  
     } 
    \label{fig:vine_codes}
\end{figure*}

To run useful, large-scale quantum algorithms, quantum error-correction (QEC) is required to suppress the noise encountered in physical devices. A leading QEC code candidate is the surface code~\cite{Kitaev2003Fault,Dennis2002Topological,Bravyi1998Quantum,Fowler2012Surface,Campbell2017Roads}, owing to its high threshold, 2D-local connectivity requirements, amenability to fast and accurate decoders~\cite{Dennis2002Topological,PyMatching2023,Delfosse_2021_union_find,shutty2024efficientnearoptimaldecodingsurface,Bravyi_2014_MLD_SC}, and the existence of mature fault-tolerant computation schemes~\cite{Raussendorf2007Fault,Horsman2012Lattice,Bombin_2009,Brown_2017,Brown2020FaultTolerant,Bravyi_2005,Litinski_2019_MSD,Litinski_2019_GoSC,Chamberland_2022_twist_free,Geher_tangling_schedules_2024,low2026denserplanarsurfacecode}. The 2D planar layout of the surface code is particularly well-suited to superconducting platforms, where high-degree or non-local connectivity between qubits remains a challenge. Superconducting devices benefit from fast and high-fidelity gates, and several core primitives of fault-tolerant quantum computation have already been demonstrated on these devices~\cite{Google2023BelowThreshold,Gupta_2024,Google_SC_2024,eickbusch2025demonstratingdynamicsurfacecodes,rosenfeld2025magicstatecultivationsuperconducting,Harper_2026_Characterising,lee2026scalablequantumerrorcorrection,wang2026LatticeSurgery,Wang_2026_BB_demo}. They are therefore a promising platform for future quantum computers at the utility scale.  

Recent advancements in quantum low-density parity check (qLDPC) codes have demonstrated  high rate, high distance codes at intermediate scales that can dramatically reduce the large overheads of QEC associated with the surface code~\cite{Bravyi2024HighThreshold,webster2026pinnaclearchitecturereducingcost,yoder2025tourgrossmodularquantum,xu2026distillingmagicstatesbicycle,cain2026shorsalgorithmpossible10000,tan2025singleshotuniversalityquantumldpc,Menon_2026,jacob2026singleshotdecodingfaulttolerantgates,malcolm2025computingefficientlyqldpccodes,Scruby_2026,Geher2025Directional,steffan2025tilecodes,liang2025planarquantumlowdensityparitycheck,radebold2025explicitinstancesquantumtanner,Panteleev2021Degenerate,Tremblay2022Constant,Pecorari2025HighRate,2BGA_codes}. These codes may help reduce the overheads of running useful quantum algorithms by orders of magnitude~\cite{webster2026pinnaclearchitecturereducingcost,yoder2025tourgrossmodularquantum,cain2026shorsalgorithmpossible10000}. However, this comes with the tradeoff that many highly performant qLDPC codes, such as the celebrated bivariate bicycle codes~\cite{Bravyi2024HighThreshold}, require non-local connectivity when the qubits are embedded in two or even three dimensional space. This poses a particular challenge for superconducting architectures which have fixed qubit layouts built into a chip. While there exist demonstrations of bivariate bicycle codes in superconducting platforms~\cite{Wang_2026_BB_demo}, the logical fidelities demonstrated are far lower than is possible with the surface code~\cite{Google_SC_2024}. In general, the technology required for high-fidelity, non-local connections between superconducting qubits remains a challenge to realise.


Directional codes are a family of recently-introduced qLDPC codes defined on a torus that can achieve high-rate, high-distance codes while only requiring local connectivity~\cite{Geher2025Directional,rowshan2026directional}. 
They achieve this by using iSWAP gates for syndrome measurement. The iSWAP, along with the CZ gate, is native to superconducting platforms, and is a strong contender for the backbone of syndrome extraction circuits~\cite{McEwen_2023_relaxing,eickbusch2025demonstratingdynamicsurfacecodes,iswap_1,iswap_2,iswap_3,iswap_with_cz}. 
By re-casting the qubit-swap mechanism in the iSWAP gate as an opportunity rather than a hindrance,  directional codes can achieve stabiliser support spanning further than nearest neighbours as measure and data qubit positions are dynamically updated throughout the syndrome extraction cycle.
However, directional codes still require non-local connectivity on a planar chip as they are embedded on a torus. While techniques exist for enforcing open boundary conditions on 2D translationally invariant codes originally defined on toric layouts~\cite{steffan2025tilecodes,liang2025planarquantumlowdensityparitycheck,eberhardt2024pruningqldpccodesbivariate}, it is not clear whether directional codes will be amenable to these techniques while retaining 2D locality and low-depth syndrome measurement circuits. Moreover, since encoding rate and distances, along with circuit-level performance, can differ dramatically between open and closed boundary conditions, it is unclear whether the gains achieved by directional codes will transfer to the cases of practical relevance for superconducting devices. 

We address this challenge in this manuscript by introducing vine codes (so-named because of the shape of their stabiliser supports; see Fig.~\ref{fig:vine_codes}). Vine codes demonstrate that it is possible to achieve higher rate and distance codes than the surface code without requiring any other connectivity than a 2D, planar square grid. These codes generalise the directional codes in several ways. Firstly, vine codes are defined with open boundaries constructed by introducing routing qubits. Secondly, by utilising both iSWAP and CZ gates for syndrome extraction, vine codes form a larger family of qLDPC codes. 

We extend the utility of vine constructions by also introducing ``flip-vine codes". By altering the roles of measure qubits (between measuring $X$ and $Z$ stabilisers) in each QEC round, flip-vine codes are weakly self-dual and possess single-qubit transversal Clifford gates. These features are useful for fault-tolerant logic~\cite{Kubica_2015} and for magic state cultivation~\cite{gidney2024magicstatecultivationgrowing,claes2025cultivatingtstatessurface,vaknin2026efficientmagicstatecultivation,sahay2026foldtransversalsurfacecodecultivation}. 
We provide examples of flip-vine codes with periodic boundary conditions that support transversal single-qubit Clifford gates. These reduce data qubit numbers by up to $\sim 83.3\%$ relative to the surface code.

Through numerical searches, we find high-performing vine codes with lower overheads than the surface code (see Figs.~\ref{fig:nSeWwSs_results_fig2} and~\ref{fig:nSeWWwSs_results_fig3}).  We construct vine codes with  parameters such as $[[221,6,7]]$, $[[121, 4, 6]]$, $[[234,9,6]]$ and $[[437, 6, 10]]$. By providing exact circuit distances for small-instances at distance $7$, we find vine codes that reduce the data and measure qubits required by up to $28\%$ relative to the surface code.  We demonstrate that these efficiencies over the surface code are expected to improve further for larger distances. We find examples of vine codes on boundary geometries with no distance-reducing hook errors. 

Vine codes utilise routing qubits to achieve an open boundary.
Including routing qubits, vine codes still maintain a lower total overhead relative to the surface code. Remarkably, we find vine code examples that require $\sim 18\%$ fewer total qubits (including routing) than the surface code at distance 10 (in a code which we expect to have no distance-reducing hooks). Again, these benefits are expected to improve for larger distances.
We introduce various techniques that help minimise the number of routing qubits required for syndrome measurement, reducing it by up to $\sim66\%$ in relevant examples, compared to naive requirements.

We perform circuit-based simulations of several examples of vine codes to demonstrate 
their expected performance under a superconducting-inspired noise model. At a noise rate of $p = 10^{-3}$, accessible to near-term devices~\cite{ibm_compute},  vine codes perform better or comparably to the surface code despite using fewer qubits. 

We provide detailed descriptions of how to construct open boundaries for vine codes. We distinguish between ``Pauli boundaries,"  analogous to the $X$/$Z$ boundaries of the surface code, and ``generalised boundaries," which encompass the coloured boundaries of the 2D color code~\cite{2DCC_bombin_delgado,Kesselring_2024_anyon_condensation,Kesselring_2018_bdrys_twists}. We provide examples of these generalised boundaries for (flip-)vine codes and show that the stabilisers in these examples can be measured with no increase in the circuit depth.

The remainder of this paper is structured as follows. 
In \cref{sec:vinecodes}, we introduce vine codes (on an infinite plane) and flip-vine codes. In \cref{sec:boundaryconstruction}, we describe both the Pauli and generalised boundary constructions. In \cref{sec:results}, we report the results of an exhaustive search of all valid vine codes up to stabiliser weight $9$, and report example vine and flip-vine codes that out-perform the surface code in terms of qubit overhead. We also report the results of our numerical simulations. 
In \cref{sec:conclusion}, we conclude and discuss future directions. In the appendices, we provide further details of our numeric searches for valid vine codes, the geometric properties of vine codes, the boundary construction at corners, procedures for routing qubit minimisation, and the generalised boundary circuits.

\begin{figure*}[ht]
    \centering
    \includegraphics{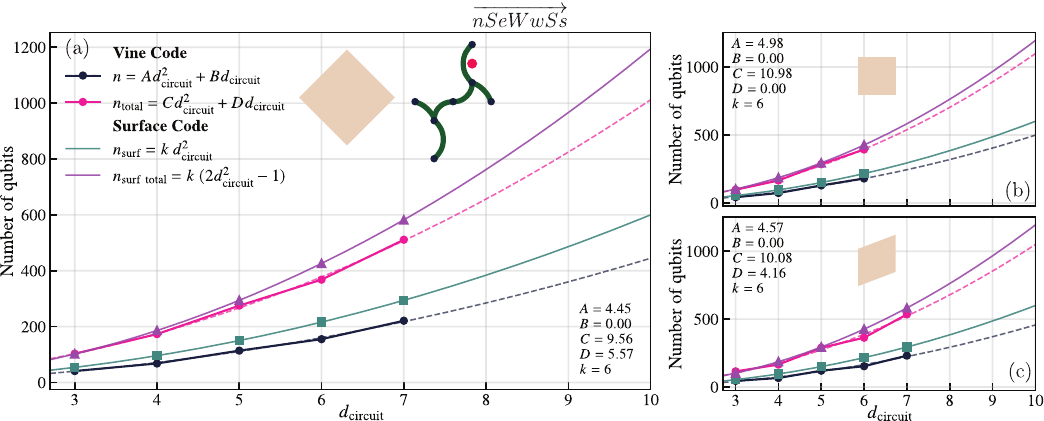}
    \caption{The qubit overhead scaling of the $\overrightarrow{nSeWwSs}$ vine code for three different open boundary geometries, each encoding $k=6$ logical qubits. The stabiliser support is indicated schematically by the green vine and red measure qubit in (a). The brown shape at the top of each subfigure indicates the open boundary geometry. For each boundary geometry (see \cref{fig:code_patches}), we plot the total number of qubits $n_{\mathrm{total}}$ (including data, measure and routing qubits) and number of data qubits $n$ needed to create a patch of circuit distance $d_\mathrm{circuit}$, comparing these to the total number of qubits and data qubits, respectively, needed for $k$ surface codes with the same distance. We fit the vine code $n_{\mathrm{total}}$ and $n$ values to the ansatz indicated, and show the fit coefficients $A$, $B$, $C$ and $D$. In each case shown, $C < 2k$ and $A < k$, therefore we expect vine code efficacy over the surface code to only improve for larger distances. 
    \textbf{(a)} Rotated geometry. \textbf{(b)} Square geometry. \textbf{(c) } Vertical parallelogram.}
    \label{fig:nSeWwSs_results_fig2}
\end{figure*}

\begin{figure*}[ht]
    \centering
    \includegraphics{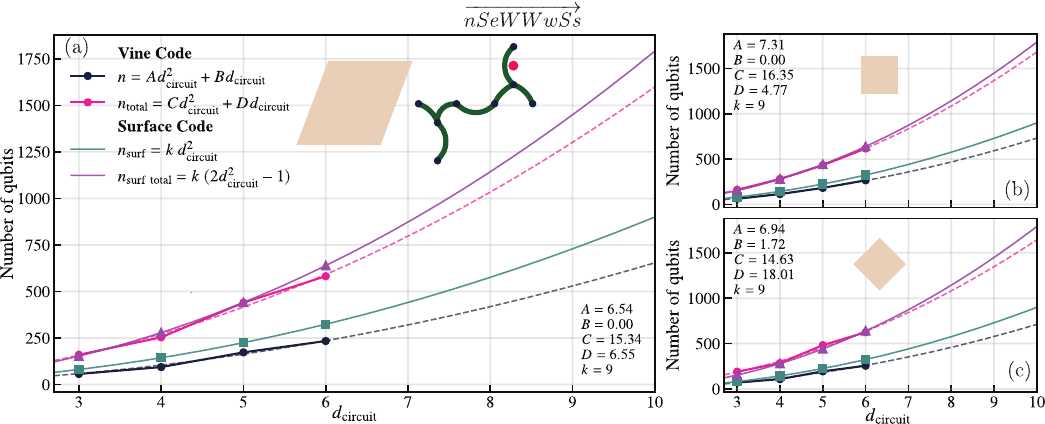}
    \caption{The qubit overhead scaling for three different open boundary patches of the $\overrightarrow{nSeWWwSs}$ vine code, each encoding $k = 9$ logical qubits. 
     \textbf{(a)} Horizontal parallelogram. \textbf{(b)} Square. \textbf{(c)} Rotated.}
    \label{fig:nSeWWwSs_results_fig3}
\end{figure*}

\section{Vine Codes}
\label{sec:vinecodes}

We first introduce the vine codes on an infinite plane and introduce boundaries in \cref{sec:boundaryconstruction}. Vine codes are defined with qubits on a planar square grid and with nearest neighbour gate connectivity, see \cref{fig:vine_codes}. Qubits are arranged so that all measure qubits have only data qubits as nearest neighbours. Inspired by Ref.~\onlinecite{Geher2025Directional}, vine codes are defined by a ``step sequence" which encode a series of gates, and a ``layout" which encodes relative initial positions of $X$- and $Z$- type measure qubits.

The step sequence specifies the series of gates that are performed during the syndrome extraction circuit. For simplicity, we construct steps as CXSWAP or CX gates (which are equivalent to iSWAP and CZ respectively, up to single-qubit Clifford gates). We define eight types of steps that can appear in a sequence. The steps $\vec{N}$, $\vec{E}$, $\vec{S}$, $\vec{W}$ each correspond to a CXSWAP gate performed between each measure qubit and the neighbouring data qubit in the North, East, South or West direction, see \cref{fig:vine_codes}(a)-(b).  Extending from Ref.~\onlinecite{Geher2025Directional}, we also introduce steps $\vec{n}$, $\vec{e}$, $\vec{s}$, $\vec{w}$, which correspond to a CX gate (no swap) performed between measure and data qubits in the indicated cardinal direction, see \cref{fig:vine_codes}(c)-(d). The control and target qubits of each CX/CXSWAP gate depend on whether the measure qubit involved in the gate is a $Z$ or $X$ measure qubit -- $Z$ ($X$) measure qubits are always the target (control) qubits of any entangling gates. 
For example, for an $\vec{N}$ step in the sequence, a CXSWAP controlled on an $X$ measure qubit will target the neighbouring data qubit directly above it in the square lattice grid.
Similarly, during the $\vec{N}$ step, CXSWAP gates that are targeted on $Z$ measure qubits will be controlled by data qubits directly above.

The supports of stabilisers follow a vine shape, with a main root generated by CXSWAP directions, and small offshoots generated by CX gates performed along the way. \cref{fig:vine_codes}(a)-(d) shows how black data qubits and pink or blue measure qubits are paired up and entangled for each of the operations defined. 

The layout encodes the initial positions of $X$ and $Z$ measure qubits. In \cref{fig:vine_codes}, we depict one typical layout which has alternating rows of $X$- and $Z$-stabilisers. This corresponds to ``layout 1" in Ref.~\cite{Geher2025Directional}. Many other valid layouts exist; for example we consider an alternative layout for a flip-vine code in \cref{sec:gen_bdrys}. However unless stated otherwise, we assume the layout in \cref{fig:vine_codes} for the remainder of this manuscript.

Figures~\ref{fig:vine_codes}(e)-(f) show an example vine code $\overrightarrow{nSewSs}$ on layout 1 and the corresponding vine code stabiliser support generated. The red measure qubit begins by entangling with the data qubit $d_1$ via a CX gate in the $\vec{n}$ direction. It does not move during this step. Next, we perform an $\vec{S}$ step, where the measure qubit is entangled with $d_2$ via a CXSWAP and therefore also swaps positions with $d_2$. This now allows the measure qubit to have nearest neighbour support with a different set of qubits compared to its initial position. The remaining moves, where the measure qubit does a combination of CX and CXSWAP gates, allow it to entangle with qubits $d_3$-$d_6$.
\cref{fig:vine_codes}(f) shows the support of a stabiliser and expanding detecting region~\cite{McEwen_2023_relaxing} generated by the $\overrightarrow{nSewSs}$ step sequence, relative to the dynamic position of the red measure qubit.

Importantly, in even syndrome extraction rounds, the measure qubit takes the opposite route, returning it to its initial position while entangling with data qubits in reverse order. The period of the total measure qubit walk returning to its original position 
is therefore two syndrome extraction rounds. While this return route convention is assumed for vine codes, when we introduce flip-vine codes in \cref{sec:flip_vine_codes} we introduce an alternative return instruction for even syndrome extraction rounds. 

Only certain step sequence and layout combinations define a valid stabiliser code replete with a syndrome extraction circuit. The combination needs to ensure that 1) stabilisers of opposite type have even overlapping support, 2) measure qubits are not entangled or ``interleaved" at the measurement step, 3) the code has topological order and 4) the code generates entanglement across the whole patch. The second condition requires that, for $X$- and $Z$-stabilisers with common data support, the subset of data qubits entangled with the $X$-measure qubit before the $Z$-measure qubit must be of even size, and vice versa. We discuss the conditions 1-4 further in \appref{app:valid_sequences}. Many vine code sequences are also equivalent circuits modulo $D_4$ symmetry, or generate stabilisers with identical data qubit support via different step sequences. We discuss these symmetries in \appref{app:exhaustive_sequence_search}.


\begin{figure*}
\centering
\begin{tikzpicture}
  \node[anchor=south west, inner sep=0] (img) at (0,0)
    {\includegraphics[width=0.8\linewidth]{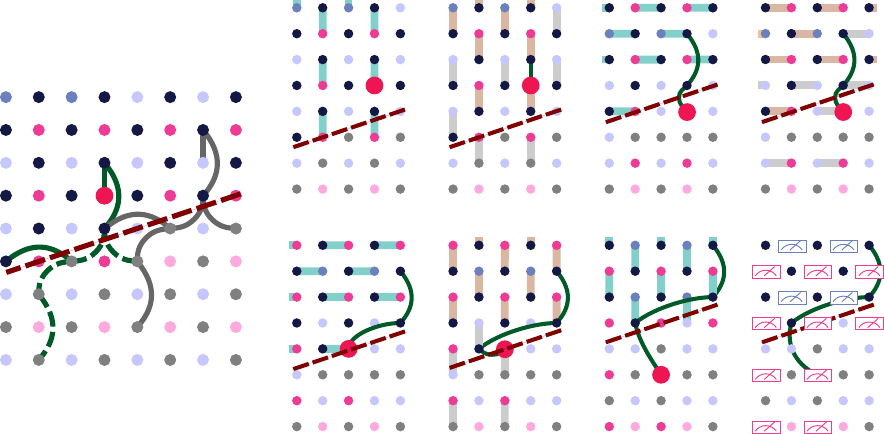}};
  \begin{scope}[x={(img.south east)}, y={(img.north west)}]
    \node[anchor=north west] at (-0.04, 0.850) {(a)};
    \node[anchor=north west] at (0.29, 1.00) {(b)};
    \node[anchor=north west] at (0.47, 1.00) {(c)};
    \node[anchor=north west] at (0.64, 1.00) {(d)};
    \node[anchor=north west] at (0.82, 1.00) {(e)};
    \node[anchor=north west] at (0.29, 0.45) {(f)};
    \node[anchor=north west] at (0.47, 0.45) {(g)};
    \node[anchor=north west] at (0.64, 0.45) {(h)};
    \node[anchor=north west] at (0.82, 0.45) {(i)};
  \end{scope}
\end{tikzpicture}
    \caption{An example of an $X$-type boundary for the $\overrightarrow{nSeWwSs}$ code (red dashed line). Black dots indicate data qubits, solid blue/pink dots indicate $Z$/$X$ measure qubits that are used for stabiliser measurements and remaining light grey/blue/pink dots indicate routing qubits that originated as data/$Z$/$X$ qubits before introducing the boundary. \textbf{(a)} An $X$-type stabiliser that intersects the boundary is truncated to its support on data qubits above the boundary (green vine, dashed below the boundary). A $Z$-type stabiliser that intersects the boundary is removed from the stabiliser group (grey vine). \textbf{(b)}-\textbf{(i)} The syndrome measurement circuit close to the boundary is shown. The green truncated $X$-type stabiliser is shown at each step. Light blue lines indicate CX gates, brown lines indicate CXSWAP gates, and grey lines indicate SWAP gates. The location of the boundary line follows the data qubits.\label{fig:boundary_stabilisers}}
\end{figure*}

\subsection{Flip-vine codes for transversal Clifford gates}
\label{sec:flip_vine_codes}

Here, we propose a variant of vine codes that are weakly self-dual, meaning their $X$ and $Z$ parity check matrices are equivalent. A consequence of weak self-duality is that a Hadamard gate applied to all data qubits is a transversal logical operation. Additional constraints may similarly result in the presence of a transversal $S$ gate. An example of a family of codes with these properties is the 2D color codes~\cite{2DCC_bombin_delgado,Kubica_2015}.

The flip-vine codes are defined as follows. Firstly, select an even-length step sequence of length $w$ (and assume the stabiliser weights are also all $w$). We then select a subset of the measure qubits in the layout which will be ``active"; the remaining will be qubits used for routing (see below) and will not measure any stabilisers. All active measure qubits measure an associated stabiliser ($X$-type or $Z$-type) for odd QEC cycles. For even QEC cycles, when the step sequence is traced out in reverse, the measure qubits measure the opposite Pauli-type stabiliser with the same support ($Z$-type or $X$-type, respectively). In this way, the measure qubit's Pauli type ``flips" each QEC round, and for every stabiliser there is a corresponding stabiliser with opposite Pauli type with identical support. 

Given that measure qubits of flip-vine codes now support double the number of stabilisers relative to vine codes, half (or approximately half when open boundaries are constructed) the measure qubits of the flip-vine code will be active, while the remainder will be considered as routing qubits, performing swaps during their step sequence instead of entangling gates. This is to ensure that the total number of stabilisers is not increased and that all stabilisers commute.

Here, we prescribe how to determine which measure qubits remain active and which become routing in a flip-vine code. 
Let us compare the regular vine code with a given step sequence $\mathcal{V}$ to the flip-vine code with the same sequence $\mathcal{V}_F$. For $\mathcal{V}$, let us denote the set of $X$ ($Z$) measure qubits $A_X$ ($A_Z$) and the set of data qubits $\mathcal{D}$. 
In $\mathcal{V}_F$, active measure qubits $\mathcal{A}$ produce stabiliser generators of both Pauli types, and routing qubits $\mathcal{R}$ do not entangle at all with the code. 

To decide which measure qubits should be demoted to routing qubits, we start with an arbitrary measure qubit $a_0$ which we define to be active and consider all the stabilisers of the vine code, $\mathcal{V}$. Let measure qubit $a_0$ be associated with stabiliser $S_{a_0}$. Consider the $n$ stabilisers that have odd overlap with $S_{a_0}$; these are associated with measure qubits separated from $a_0$ by a vector from a set we denote $\Delta'_\text{odd}$, defined in Appendix~\ref{app:delta_sets}. Let us label the $n$ measure qubits associated to these stabilisers $\{a_1,a_2,\ldots, a_n\}$. In $\mathcal{V}_F$, if these measure qubits are active, they support both an $X$-type and a $Z$-type stabiliser, both of which will anti-commute with a stabiliser associated with $a_0$. Hence, because $a_0$ is active, $\{a_1,a_2,\ldots, a_n\}$ must be demoted to routing qubits in $\mathcal{V}_F$ to ensure all stabilisers commute.

Let us then construct the total lattice of routing qubits that must include these $\{a_1,a_2,\ldots a_n\}$ in a systematic way. To do so, we start with an arbitrary ``seed" routing qubit position $\text{seed}_X\in A_X$. We then consider all qubits at positions  $\text{seed}_X + \overrightarrow{u_1}+\overrightarrow{u_2}$, where $\overrightarrow{u_1},\overrightarrow{u_2}\in \Delta'_\text{odd}$, as routing qubits also. Subsequently, we can build the lattice of routing qubits centered at seed$_X$, adding in all qubits connected to seed$_X$ by any even sum of $u_i\in \Delta'_\text{odd}$ vectors. Following the same logic, we define a set of routing qubits that start from a $\text{seed}_Z \in A_Z$. This results in the creation of two disjoint lattices $\mathcal{R}_{X/Z} = \lbrace \text{seed}_{X/Z} + \sum_{i=1}^{2\ell}\overrightarrow{u_i} \, | \, \ell \in \mathbb{Z},\, \overrightarrow{u_i}\in \Delta'_\text{odd}\rbrace$.  From here, we define $\mathcal{R}$ to be the union of all qubits in $\mathcal{R}_X$ and $\mathcal{R}_Z$. Given vectors from $\Delta'_\text{odd}$ connect measure qubits of the same type in $\mathcal{V}$ by definition (see Appendix~\ref{app:delta_sets}), $\mathcal{R}_X\subset A_X$ and $\mathcal{R}_Z \subset A_Z$. Assuming that there exist no $\overrightarrow{u_i}\in \Delta'_\text{odd}$ such that $\sum_{i=1}^{2\ell}\overrightarrow{u_i}\notin \Delta'_\text{odd}$ for some integer $\ell$ (true for all cases we consider in this manuscript), then all remaining measure qubits can support $Z$ and $X$ stabilisers that mutually commute, and hence are left as active measure qubits.

The circuit for the flip-vine code is then defined as follows: measure qubits in $A_X\backslash \mathcal{R}_X$ ($A_Z\backslash \mathcal{R}_Z$) measure an $X$-type ($Z$-type) stabiliser in odd QEC rounds, and a $Z$-type ($X$-type) stabiliser with the same support in even QEC rounds. The measure qubits follow the step sequence defining the flip-vine code in odd rounds and the reverse sequence in even rounds. Routing qubits in $\mathcal{R}_X$ and $\mathcal{R}_Z$ traverse the same ``vine" defined by the step sequence as the active measure qubits, but with SWAP gates rather than CXSWAP gates and therefore do not entangle with data qubits. We can use superconducting-native two-qubit gates and resets to perform SWAPs in a single circuit layer, see \cref{sec:pauli_bdrys}. Data qubits traverse the reverse step sequence to the measure/routing qubits. For some of the circuit layers in a single syndrome extraction round (forwards or reverse), a given data qubit will be involved in entangling gates, while for the rest of the circuit layers, it will simply be swapping with routing qubits. 

With a suitable choice of boundary conditions -- either periodic boundary conditions, or certain generalised open boundary conditions (see Section~\ref{sec:gen_bdrys}) -- the resulting code is weakly self-dual. All flip-vine codes (with appropriate boundary conditions) therefore support a transversal Hadamard gate. The logical action of this gate will be $\overline{H}_j$ for all $j=1,\ldots, k$. The codes will also possess a transversal $S$ gate if we can find a bipartition of the data qubits into subsets $A$ and $B$ such that $|s\cap A| \equiv |s\cap B|$ mod $4$, for all stabiliser data qubit support sets $s$. We may then apply $S$ ($S^\dagger$) gates to all data qubits in $A$ ($B$) to obtain a transversal gate~\cite{Kubica_2015}. This is because an $X$ stabiliser conjugated by such a gate maps to a product of an $X$ and $Z$ stabiliser, up to a phase. This phase is equal to $i^{|s\cap A|-|s\cap B|}$. All codes we find admit such a bipartition. Note that all weight-8 stabiliser codes admit a trivial bipartition of $ A = \mathcal{D}$ and $B = \emptyset$. Therefore, stabiliser weight-$8$ codes admit a logical $S$ gate by simply applying $S$ transversally to all data qubits. The logical action is $\overline{S}_j$ for $j=1\ldots k$.

\section{Boundary Construction}
\label{sec:boundaryconstruction}

Here, we demonstrate how to produce explicit finite-size instances of vine codes with open boundaries. We begin by describing the construction of $X$/$Z$ boundaries which we refer to as Pauli boundaries, before discussing a more general approach.

\begin{figure*}
    \centering
    \includegraphics{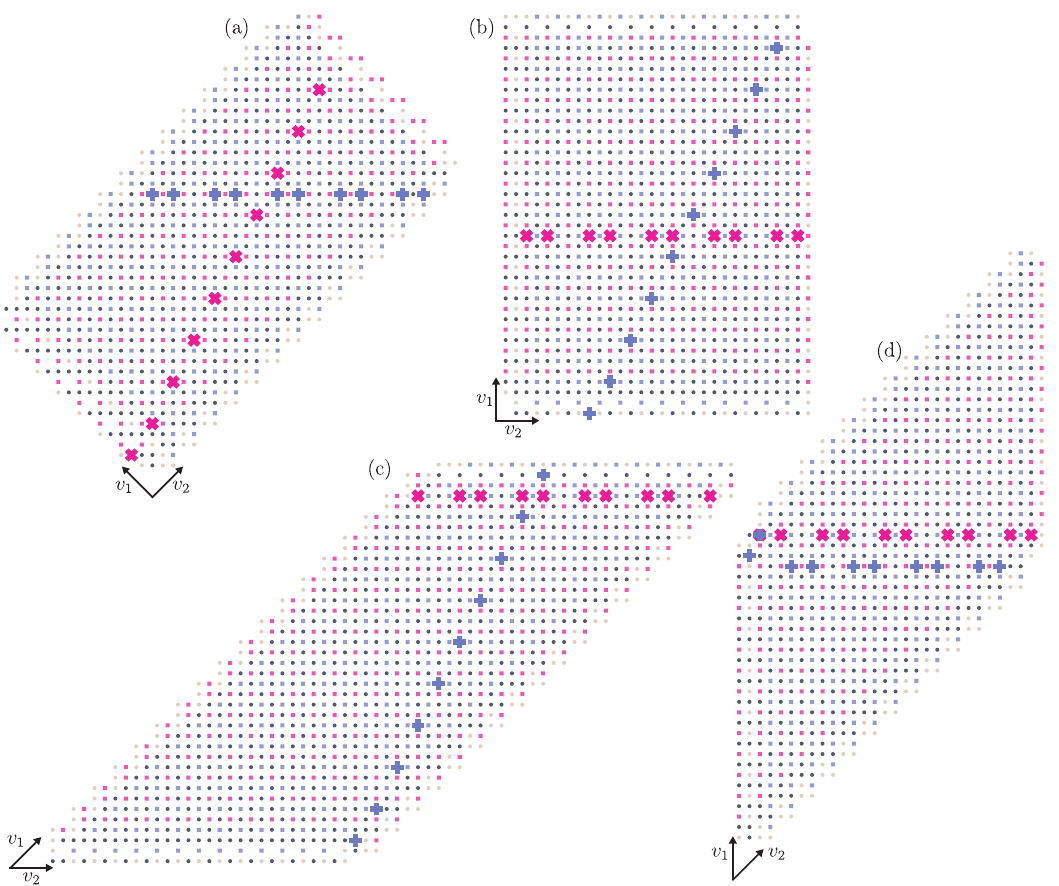}
    \caption{The $\overrightarrow{nSeWwSs}$ code with $d = 10$ for four boundary geometries: rotated, square, horizontal parallelogram and vertical parallelogram. Each example has the same boundary types, i.e. an $X$-type ($Z$-type) along the boundaries parallel to $v_1$ ($v_2$). The boundary geometry given by the unit vectors of $v_1$ and $v_2$ in each case has a significant effect on the number of qubits required to achieve $d=10$ (see \cref{tab:distance_10_codes}). 
    We show an $X$- ($Z$)-logical representative for each patch with pink crosses (blue crosses). 
    Lower-distance patches showed no hook errors for the same four geometries ($d=d_{\mathrm{circuit}}$; see \cref{tab:performing_boundary_cuts}).
    Therefore, we would expect the logical supports presented here to also represent the lowest-weight logicals at the circuit level. \textbf{(a)} Rotated geometry.
    \textbf{(b)}  Square geometry. 
    \textbf{(c)} Horizontal parallelogram.
    \textbf{(d)} Vertical parallelogram. 
    \label{fig:code_patches}}
\end{figure*}

\subsection{Pauli boundaries}\label{sec:pauli_bdrys}

We construct $X$ and $Z$ Pauli boundaries where $X$ and $Z$ logicals terminate respectively. These boundary types are also constructed in tile codes~\cite{steffan2025tilecodes,liang2025planarquantumlowdensityparitycheck}, and rotated and unrotated surface codes. Similarly to those cases, we expect vine codes with Pauli boundaries to encode half the number of logical qubits as the maximum possible with periodic boundary conditions~\cite{Liang_2025_gen_toric_codes}. 

To construct the boundary geometry, we first define vectors $v_1$ and $v_2$ that define a parallelogram with corners $\lbrace 0, v_1,v_2,v_1+v_2\rbrace$. We envisage cutting out data qubits supported on this parallelogram patch from the infinite plane.
We will construct boundaries running from $0$ to $v_1$ and from $v_2$ to $v_1+v_2$ to be Pauli $X$ boundaries, and those running from $0$ to $v_2$ and $v_1$ to $v_1+v_2$ to be Pauli-$Z$ boundaries. 

To do this, we first consider the data qubit support of all stabiliser operators. We firstly discard all $X$- and $Z$-stabilisers that have support fully outside the parallelogram. For stabilisers with partial support inside the parallelogram, we inspect whether the stabiliser intersects a Pauli $X$ or $Z$ boundary. We discard all $X$($Z$) stabilisers that intersect a Pauli-$Z$($X$) boundary. Meanwhile, we truncate every $X$($Z$) stabiliser that intersects a Pauli-$X$($Z$) boundary to the partial support within the parallelogram. Figure~\ref{fig:boundary_stabilisers}(a) shows an example of a discarded $Z$ stabiliser (grey) intersecting an $X$ boundary (red dashed line), along with an $X$ stabiliser truncated to a weight-3 operator, as its support below the boundary is removed (green solid/dashed line). We discuss the treatment of corners in Appendix~\ref{app:corners}. We then delete the following qubits:
\begin{itemize}
    \item All data qubits that lie outside the patch parallelogram
    \item All measure qubits that measure a removed stabiliser 
    \item All data qubits (within the parallelogram patch) that are not in the support of a stabiliser of both Pauli-types (such cases would introduce low-weight logicals).
\end{itemize}
Finally, this procedure may result in a code whose Tanner graph contains multiple disjoint connected components. For example, there may be isolated Bell pairs at corners that are not entangled with the rest of the patch due to removed stabilisers. We  keep all data/measure qubits that form the largest Tanner graph connected component. In some cases, the vine code sequence does not generate entanglement across the whole system even in the infinite plane and the boundary procedure described in this section results in approximately equal sized disconnected regions. See \appref{app:valid_sequences} for a discussion of this case; we ignore these instances in our exhaustive vine code search in \cref{sec:vine_code_search}.

The above procedure gives us the measure and data qubits needed for a patch with Pauli boundaries for a given parallelogram. However, given some stabilisers near the boundary have truncated support, their associated measure qubits do not need to entangle with all the qubits in their step sequence as some data qubits that originally featured in their support have been discarded. Similarly, some data qubits near the boundary will not need to be entangled with the measure qubits along their step sequence that have been discarded.
However, these discarded measure and data qubits may still be required in the planar grid to achieve the required connectivity. We label these ``routing qubits". The simplest solution that achieves the required connectivity maintains that all data and measure qubits walk along their usual trajectories, including near the boundary. If a data/measure qubit $q$ needs to step in direction $\vec{D}$ but there is no measure/data qubit located in that direction because it has been discarded by the process above, we define a routing qubit in that location. We then exchange the entangling gate for a SWAP gate in the corresponding circuit layer between $q$ and the routing qubit. 

The circuit that measures the truncated stabilisers close to the boundary using this method is straightforward. One applies the same step sequence to all measure qubits (and the reverse sequence to all the data qubits). For each measure qubit $a$ and each step in the sequence, $\vec{D}$, if there is a data qubit in the $\vec{D}$ direction from $a$, the usual gate at that point in the sequence is applied (CX or CXSWAP). Otherwise, if there is a routing qubit in that direction, a SWAP gate is applied if the step in the sequence is in $\lbrace \vec{N}, \vec{E}, \vec{S}, \vec{W}\rbrace$, or otherwise no gate is applied. Figure~\ref{fig:boundary_stabilisers}(b)-(i) shows an example of a truncated $X$ stabiliser being measured by the syndrome extraction circuit.

We can perform these SWAP gates in a single circuit layer by resetting the routing qubits in state $\ket{0}$ at the beginning of each QEC round and applying CZSWAP gates instead of SWAP gates between routing and other qubits. In the absence of errors, the effect of this gate is the same as a SWAP gate. We can also measure the routing qubits at the end of each QEC round and form a ``flag" detector from this measurement -- a flipped measurement outcome of $-1$ indicates an error has occurred~\cite{Chamberland_2020}.

In many cases, we may reduce the routing qubit number by removing unnecessary SWAP gates from the circuit near the boundaries. We use a series of techniques (see Appendix~\ref{app:swap_heuristics}) to find these removable SWAP instances and observe large reductions in required routing qubit numbers by removing them. For example, the distance-10 $\overrightarrow{nSeWwSs}$ code patch in Fig.~\ref{fig:code_patches}(a) has its routing qubit number reduced by $\sim 66.2\%$ (from $198$ to $67$; see \cref{tab:distance_10_codes}).

Figure~\ref{fig:code_patches} shows examples of parallelogram code patches obtained for the $\overrightarrow{nSeWwSs}$ step sequence using the procedure above. We consider four parallelogram geometries which we refer to as rotated, square, horizontal parallelogram, and vertical parallelogram. All of these examples support distance-10 codes (for both $X$ and $Z$ logical operators) and we provide example minimum-weight logical operators for each patch. While using the same vine code sequence to generate a code encoding the same number of logical qubits $k=6$ at distance $10$, the qubit overheads vary dramatically (see \cref{tab:distance_10_codes}). In this example, the rotated patch has the smallest qubit overhead, requiring $82\%$ of the qubits required for equivalent surface code patches. The horizontal parallelogram has the largest overhead, requiring $104\%$ of the total qubits required for the surface code.

\begin{figure*}
    \centering
    \includegraphics[width=0.99\linewidth]{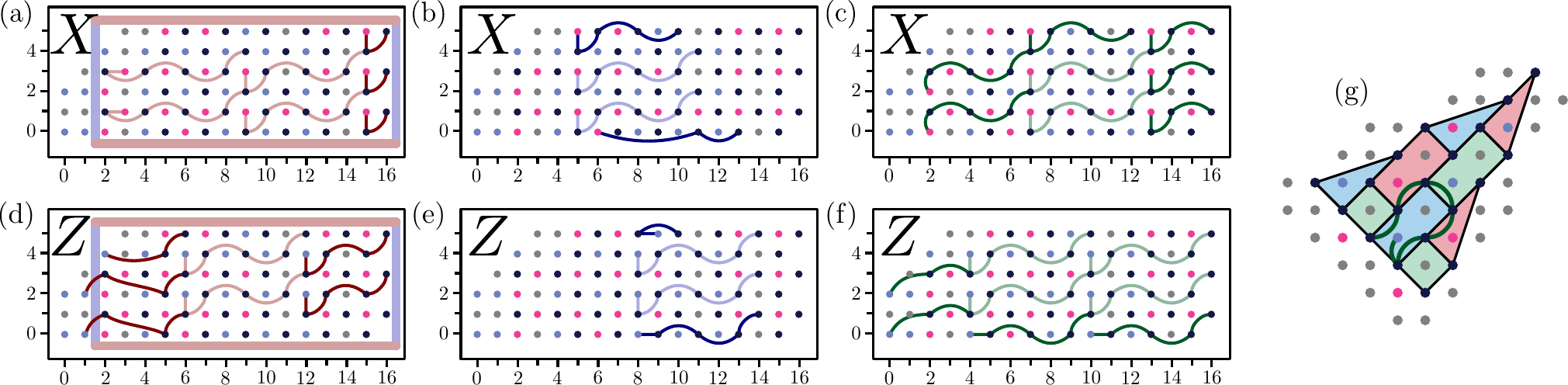}
    \caption{Examples of coloured boundaries applied to (flip-)vine codes that are equivalent (up to a finite-depth circuit) to the 2D color code. Data/$Z$-measure/$X$-measure/routing qubits are coloured black/blue/pink/grey, respectively. In \textbf{(a)}-\textbf{(f)} we present the stabilisers of a patch of $\overrightarrow{sEEEn}$ code (equivalent in support to $\overrightarrow{NEEEN}$ from Ref.~\cite{Geher2025Directional}) with coloured boundaries. In (a)-(c) [(d)-(f)] we present the $X$ ($Z$) stabilisers, all drawn between measure and data qubits in their starting positions. We colour the stabilisers of both types either red [(a) and (d)], blue [(b) and (e)] or green [(c) and (f)], and we colour the vertical (horizontal) boundaries blue (red), as shown in (a) and (d). Those stabilisers that intersect a boundary of the same (a different) colour are discarded (truncated by tunnelling operators). Any stabiliser that has been truncated is highlighted, while the bulk stabilisers are lighter in colour. In \textbf{(g)}, we present a triangular patch of $\overrightarrow{sENWSw}$ (equivalent sequence to $\overrightarrow{NENWSW}$) flip-vine code. The stabiliser support and colour is indicated by the coloured plaquettes, each hosting one $Z$ and one $X$ stabiliser. The colour of the measure qubit associated to a plaquette is given by the Pauli type of the stabiliser measured in odd rounds (in even rounds, the opposite Pauli type stabiliser is measured). Three of the plaquettes (the hexagons) are in the bulk, and the rest host boundary stabilisers. All stabilisers are drawn between the measure and data qubits as they sit at the start of the circuit. See \appref{app:coloured_bdrys} for the full circuits.\label{fig:coloured_bdry_stabs}}
\end{figure*}

\subsection{Generalised boundaries}\label{sec:gen_bdrys}

We now describe boundary constructions beyond Pauli boundaries. While we will not cover boundaries in full generality, we will introduce relevant theory for generalised boundaries along with several practical examples.

In this section, we consider codes that are equivalent (up to finite-depth unitaries) to the 2D color code
to construct ``coloured boundaries" that are analogous to the red/green/blue boundaries of the color code.~\cite{Kesselring_2018_bdrys_twists,Kesselring_2024_anyon_condensation}.
Since vine codes are 2D-local, translationally invariant, commuting Pauli stabiliser codes defined on qubits, 
they are equivalent (up to finite-depth unitary transformations) to $N \in \mathbb{Z}$ copies of the toric code~\cite{Bombin_2012_universal}. The integer $N$ can be obtained from the number of logical qubits the vine code supports when defined with Pauli boundaries (on a large enough patch). Therefore, our starting point is vine codes that encode $k = 2$ logical qubits on a patch with Pauli boundaries (which includes codes with $k = 4$ on a torus in Ref.~\cite{Geher2025Directional}). These vine codes will instantiate the same topological phase as the 2D color code (see \appref{app:coloured_bdrys} for more information on the 2D color code boundaries and anyons).

The first step is to identify the ``anyon model" that describes the syndrome defect equivalence classes of the code; this requires associating an ``anyon type"  to each stabiliser generator. We associate a combination of toric code charges to each anyon type.
Specifically, each anyon type $a$ associated with an $X$ stabiliser ($b$ associated with a $Z$ stabiliser) is a product of toric code layer charges $a=e_1^{c_1}e_2^{c_2}\ldots e_N^{c_N}$ ($b=m_1^{c_1}m_2^{c_2}\ldots m_N^{c_N}$), for $c_i\in \lbrace 0, 1\rbrace$ where $i=1,\ldots, N$ labels the toric code layer.
Let us define the set of ``tunnelling operators" $\mathcal{X}_j^{a}$ ($\mathcal{Z}_j^{b}$), with each element indexed by an integer $j$, to be the set of Pauli $X$- ($Z$-)operators that flip exactly \textit{two} stabilisers associated with a particular anyon type $a$ ($b$). This operator is therefore associated with transporting a syndrome charge without increasing the weight of the total syndrome. Similarly, logical operators can be constructed by combining tunnelling operators of the same anyon type.

To construct a boundary, we consider moving from an infinite-plane code to a half-infinite plane code supported only in the $x\geq 0$ region. In the $x\leq 0$ region, we construct a trivial vacuum phase. To construct the domain wall to the vacuum phase, we first identify a maximal set of commuting tunnelling operators and applying projectors for those tunnelling operators fully suppported in the $x\leq 0$ region. For example, if $\mathcal{X}^{a}_j$ and $\mathcal{Z}^b_j$ associated with anyons $a$ and $b$ form a maximal set of commuting tunnelling operators we apply projectors $\frac{1}{2}(1+\mathcal{X}^{a}_j)$ and $\frac{1}{2}(1+\mathcal{Z}^b_j)$.  In this way,  $a$ and $b$ now belong to a ``Lagrangian subgroup" of the anyon model (for more information, see Refs.~\onlinecite{Kesselring_2018_bdrys_twists,Kesselring_2024_anyon_condensation,Kapustin_2011,Levin_2013,Barkeshli13b,Yoshida_2017}). Equivalently, the anyons  $a$ and $b$ are ``condensed" at this boundary and their associated logical operators can terminate here.

Applying projectors in this way corresponds to ``measuring out" the associated commuting tunnelling operators that are confined to the $x\leq 0$ region. We then discard any vine code stabilisers that anti-commute with the measured tunnelling operators, and use these new tunnelling operator stabilisers to truncate all other vine code stabilisers to within the boundaries of the code patch. Note that this framework captures the Pauli boundaries introduced above, where the tunnelling operators for a given boundary are all of either $X$ or $Z$ type. 

In the case of vine codes supporting an equivalent anyon model to the color code, we can colour the stabiliser generators and associated anyons red, green, or blue. In this way, the tunnelling operators for stabilisers of the same colour mutually commute (see Appendix~\ref{app:coloured_bdrys}). Coloured boundaries condense both the $X$ and $Z$ anyons associated with a single colour label. 

We consider two explicit examples for coloured boundary construction: the $\overrightarrow{sEEEn}$ vine code, and the $\overrightarrow{sENWSw}$ flip-vine code. Both of these encode $k = 2$ qubits on a Pauli boundary patch.  In Fig.~\ref{fig:coloured_bdry_stabs}, we show the result of creating small code patches with coloured boundaries for these two codes.  In Fig.~\ref{fig:coloured_bdry_stabs}(a)-(f), we show the $X$ and $Z$ stabiliser supports for the $\overrightarrow{sEEEn}$ code with vertical ``blue" boundaries and horizontal ``red" boundaries.  To construct boundaries of a given colour, say blue, we then discard any blue stabilisers (of either $X$ or $Z$ type) that intersect the boundary, and truncate red and green stabilisers that intersect the boundary using blue tunnelling operators. This can produce some complexities, which we discuss in Appendix~\ref{app:coloured_bdrys}. We finally construct a valid syndrome measurement circuit around the coloured boundaries, which is a non-trivial process. This involves gates that break the step sequence pattern near the boundaries. We present the full circuits in Appendix~\ref{app:coloured_bdrys}. We then obtain a patch with stabilisers shown in Fig.~\ref{fig:coloured_bdry_stabs}(a)-(f), with all data/measure qubits in their starting locations for the syndrome measurement circuit. This patch encodes $k=2$ logical qubits.

For the flip-vine code shown in Fig.~\ref{fig:coloured_bdry_stabs}(g), we can obtain an exact copy of a 2D color code patch. In the figure, we show an example with a triangular shape that encodes a single logical qubit. Note that the step sequence does not produce a valid code on the layout presented in \cref{fig:vine_codes}, and hence we use an alternative initial arrangement of $X$ and $Z$ measure qubits as shown in \cref{fig:coloured_bdry_stabs}(g) (referred to as layout 3 in Ref.~\onlinecite{Geher2025Directional}). Stabilisers of $X$ and $Z$ type with the same support retain equal supports after truncation (indicated by coloured plaquettes in the figures). The circuit that measures these stabilisers simply involves utilising routing qubits for these truncated stabilisers, in the same way as occurs when Pauli boundaries are imposed (see Section~\ref{sec:pauli_bdrys}). Owing to the coloured boundaries being Pauli-symmetric, the resulting code is still weakly self-dual and supports transversal single-qubit Clifford gates, as is well known for the color code~\cite{2DCC_bombin_delgado}.

\section{Results}\label{sec:results}

Vine codes show promising performance over the surface code. Here, we do an exhaustive search to find all valid vine codes up to stabiliser weight $9$, see \cref{sec:vine_code_search}. We benchmark the performance of selected vine code in multiple ways. First, we calculate their qubit overhead scaling, finding efficiencies relative to the surface code, see \cref{fig:nSeWwSs_results_fig2}, \cref{fig:nSeWWwSs_results_fig3}, \cref{fig:code_patches}, \cref{tab:performing_boundary_cuts} and \cref{sec:qubit_overhead_scaling}. We corroborate their performance by simulating their syndrome extraction circuits under circuit-level noise, see \cref{fig:sim_results} and \cref{sec:circuit-level_noise_simulations}. Finally, we present examples of flip-vine codes that permit transversal single-qubit Clifford gates, showing examples in \cref{tab:flip_vine_table}, see \cref{sec:flip-vine_code_results}.

\subsection{Vine code search}
\label{sec:vine_code_search}

We perform exhaustive searches over all possible vine code sequences up to stabiliser weight $9$. We filter for sequences that have: commutativity, independently measured stabilisers, topological order, and full patch connectivity, see \appref{app:valid_sequences} for a full description of conditions. We then reduce the number of sequences by by removing any that have step sequences or supports related by a rotation/reflection symmetry from $D_4$,
see \appref{app:exhaustive_sequence_search}. We show a final list of vine code sequences in \cref{tab:topological_vine_codes_wt1-9}.

We find $147$ step sequences up to and including stabiliser weight $9$. Of those, $96$ are stabiliser weight $9$, and $51$ are stabiliser weight $8$ or less. We focus on weight $\leq 8$ sequences in \cref{sec:qubit_overhead_scaling}.

Note that before we filter vine codes with equivalent data qubit support (see step \textbf{7} in \appref{app:exhaustive_sequence_search}), we have $906$ vine codes up to stabiliser weight $9$. In \cref{tab:topological_vine_codes_wt1-9}, we choose a single representative. However, vine codes with different entangling sequences that produce the same data qubit support may differ in circuit distance, due to possessing different hook errors. We do not optimise for hook errors when choosing a canonical representative.

\begin{table*}
\centering
\renewcommand{\arraystretch}{1.2}
\begin{tabular}{ccccccccc}
\toprule
\textbf{Vine Sequence} & \textbf{Cut} & $v_1$ & $v_2$ & \textbf{[[$n$, $k$, ($d_x$, $d_z$)]]} & $d_{\mathrm{circuit}}$ & $n_{\mathrm{total}}$ & $\frac{n_{\mathrm{total}}}{n_{\mathrm{surf\ total}}}$ & $\frac{n_{\mathrm{data+meas}}}{n_{\mathrm{surf\ total}}}$ \\
\midrule
\multirow{4}{*}{$\overrightarrow{nSeWwSs}$} & square & {[}0, 22{]} & {[}16, 0{]} & $[[180, 6, (6, 6)]]$ & 6 & 393 & 0.9225 & 0.8310 \\
 & rotated & {[}-10, 10{]} & {[}20, 20{]} & $[[221, 6, (7, 7)]]$ & 7 & 511 & 0.8780 & 0.7491 \\
 & horizontal parallelogram & {[}22, 22{]} & {[}16, 0{]} & $[[191, 6, (6, 6)]]$ & 6 & 456 & 1.0704 & 0.8826 \\
 & vertical parallelogram & {[}0, 20{]} & {[}20, 20{]} & $[[231, 6, (7, 7)]]$ & 7 & 535 & 0.9192 & 0.7835 \\
\midrule
\multirow{4}{*}{$\overrightarrow{nSeWWwSs}$} & square & {[}0, 22{]} & {[}24, 0{]} & $[[266, 9, (6, 6)]]$ & 6 & 617 & 0.9656 & 0.8185 \\
 & rotated & {[}-11, 11{]} & {[}23, 23{]} & $[[256, 9, (6, 6)]]$ & 6 & 625 & 0.9781 & 0.7872 \\
 & horizontal parallelogram & {[}22, 22{]} & {[}22, 0{]} & $[[234, 9, (6, 6)]]$ & 6 & 582 & 0.9108 & 0.7183 \\
 & vertical parallelogram & {[}0, 22{]} & {[}24, 24{]} & $[[278, 9, (6, 6)]]$ & 6 & 683 & 1.0689 & 0.8560 \\
\midrule
\multirow{2}{*}{$\overrightarrow{nSENESs}$} & square & {[}0, 22{]} & {[}15, 0{]} & $[[173, 4, (7, 12)]]$ & 7 & 384 & 0.9897 & 0.8814 \\
 & rotated & {[}-14, 14{]} & {[}13, 13{]} & $[[182, 4, (7, 8)]]$ & 7 & 434 & 1.1186 & 0.9278 \\
\midrule
\multirow{1}{*}{$\overrightarrow{nEeSSwEs}$} & rotated & {[}-17, 17{]} & {[}12, 12{]} & $[[207, 7, (6, 6)]]$ & 6 & 500 & 1.0060 & 0.8189 \\
\midrule
\multirow{1}{*}{$\overrightarrow{nENEENEs}$} & horizontal parallelogram & {[}13, 13{]} & {[}22, 0{]} & $[[144, 5, (6, 6)]]$ & 6 & 364 & 1.0254 & 0.7972 \\
\midrule
\multirow{2}{*}{$\overrightarrow{nESWWSEs}$} & square & {[}0, 22{]} & {[}13, 0{]} & $[[151, 5, (6, 6)]]$ & 6 & 372 & 1.0479 & 0.8394 \\
 & vertical parallelogram & {[}0, 22{]} & {[}13, 13{]} & $[[143, 5, (6, 6)]]$ & 6 & 373 & 1.0507 & 0.7915 \\
\midrule
\multirow{2}{*}{$\overrightarrow{nENENEs}$} & rotated & {[}-7, 7{]} & {[}17, 17{]} & $[[130, 4, (6, 8)]]$ & 6 & 320 & 1.1268 & 0.9014 \\
 & horizontal parallelogram & {[}17, 17{]} & {[}14, 0{]} & $[[121, 4, (8, 6)]]$ & 6 & 298 & 1.0493 & 0.8380 \\
\midrule
\multirow{1}{*}{$\overrightarrow{nSeWweSs}$} & rotated & {[}-11, 11{]} & {[}17, 17{]} & $[[206, 6, (6, 6)]]$ & 6 & 479 & 1.1244 & 0.9531 \\
\midrule
\multirow{1}{*}{$\overrightarrow{nENNEs}$} & vertical parallelogram & {[}0, 14{]} & {[}13, 13{]} & $[[105, 3, (6, 9)]]$ & 6 & 248 & 1.1643 & 0.9718 \\
\bottomrule
\end{tabular}
\caption{Boundary cut and qubit overhead information for selected vine sequences where the number of data and measure qubits ($n_{\mathrm{data+meas}}$) needed to encode $k$ logical qubits is less than what is required for $k$ surface codes with the same circuit level distance, i.e. $n_{\mathrm{data+meas}} / n_{\mathrm{surf\ total}} < 1$. Here, $v_1$ and $v_2$ give the vectors defining the boundary of the vine code cut into the shape as indicated (see also \cref{fig:code_patches}), $n$ gives the number of data qubits, $(d_x, d_z)$ are code distances in each basis, $d_{\mathrm{circuit}}$ is the circuit distance of the patch in both bases,  $n_{\mathrm{total}}$ is the total number of qubits used for the vine code, including data, measure and routing qubits, and $n_{\mathrm{surf\ total}}$ is the total number of data and measure qubits needed to encode $k$ logical qubits in the surface code (i.e. $n_{\mathrm{surf\ total}} = k (2 d_{\mathrm{circuit}}^2 - 1)$. 
While we show here only patches with circuit level distances verified by an exact distance finder~\cite{pryadko2024distm4ri}, we expect the efficiency of the vine code relative to the surface code to improve as distances increase (see \cref{fig:nSeWwSs_results_fig2}, and \cref{fig:nSeWWwSs_results_fig3}).  }
\label{tab:performing_boundary_cuts}
\end{table*}

\begin{table*}
\centering
\begin{tabular}{cccccccc}
    \textbf{Cut} & $v_1$ & $v_2$ & $[[n, k, d_\text{circuit}]]$ & $n_\text{routing}\rightarrow n_\text{reduced}$ & $n_\text{total}$ & $\frac{n_\text{total}}{n_{\mathrm{surf\ total}}}$ & $\frac{n_\mathrm{data+measure}}{n_\mathrm{surf\ total}}$ \\
    \midrule
    Rotated & $[-14,14]$ & $[29,29]$ & $[[437,6,10]]$ & $194\rightarrow 112$ & $980$ & $0.8208$ & $0.7270$\\
    \midrule
    Square & $[0,38]$ & $[28,0]$ & $[[538,6,10]]$ & $198 \rightarrow 67$ & $1137$ & $0.9523$ & $0.8961$\\
    \midrule
    Horizontal parallelogram & $[38,38]$ & $[28,0]$ & $[[557,6,10]]$ & $236\rightarrow 140$ & $1248$ & $1.0452$ & $0.9280$\\
    \midrule
    Vertical parallelogram & $[0,28]$ & $[28,28]$ & $[[435,6,10]]$ & $192\rightarrow 111$ & $975$ & $0.8166$ & $0.7236$
\end{tabular}
\caption{Qubit overheads relative to the surface code of the four patch geometries of the $\overrightarrow{nSeWwSs}$ code shown in \cref{fig:code_patches}, defined with vectors $v_1$ and $v_2$. In these patches, we conjecture $d_x=d_z = d_\text{circuit}$. We do not verify the circuit distances for these code patches, but we extrapolate from smaller patches that have no distance-reducing hook errors (see \cref{tab:performing_boundary_cuts}). Here, $n_\text{routing}$ is the number of routing qubits naively required for the patches, and $n_\text{reduced}$ is the reduced requirement after using techniques from \appref{app:swap_heuristics}. The final three columns use quantities defined in the caption of \cref{tab:performing_boundary_cuts}.}
\label{tab:distance_10_codes}
\end{table*}

\subsection{Qubit overhead scaling}
\label{sec:qubit_overhead_scaling}

We present selected vine codes and their qubit overhead scaling. We find small instances of vine codes with circuit distance $6$ or $7$ that already require fewer qubits than equivalent surface code patches. These efficiencies are also expected to increase as the circuit distance increases. 

Figure~\ref{fig:nSeWwSs_results_fig2} and \cref{fig:nSeWWwSs_results_fig3} show qubit overhead scaling for the $\overrightarrow{nSeWwSs}$ and $\overrightarrow{nSeWWwSs}$ codes. In each case, we construct patches with $X$/$Z$ open boundaries that obtain equal circuit distances in both bases, up to circuit distance $6$ or $7$. We verify circuit distances from an exact distance finder \cite{pryadko2024distm4ri}. We find the total number of qubits $n_{\mathrm{total}}$, including $n$ data qubits as well as measure and routing qubits required for each patch, and plot these against the circuit distance $d_{\mathrm{circuit}}$. 

For the various boundary geometries indicated, at the highest verified circuit distance, we already see that these codes are more efficient than $k$ equivalent patches of the rotated surface code with the same circuit distance. \cref{tab:performing_boundary_cuts} shows $n_{\mathrm{total}} / n_{\mathrm{surf\ total}}$ for selected vine sequences and boundary geometries. For the vine codes and patches shown in \cref{fig:nSeWwSs_results_fig2} and \cref{fig:nSeWWwSs_results_fig3}, the vine code uses fewer total qubits than the equivalent surface code construction. For example, for a rotated patch of the $\overrightarrow{nSeWwSs}$ code at distance $7$, the vine code requires $88\%$ of the total qubit count required for equivalent surface code patches. The table also shows the number of data and measure qubits ($n_{\mathrm{data+meas}}$, i.e. omitting routing qubits) required for the vine code relative to the surface code. While we use several techniques to reduce the total number of routing qubits (see \appref{app:swap_heuristics}), we envisage these reductions can be further improved.

We observe these vine codes have favourable scaling with respect to equivalent surface code patches. In \cref{fig:nSeWwSs_results_fig2} and \cref{fig:nSeWWwSs_results_fig3} we fit the qubit numbers $n_{\mathrm{total}}$ and $n$ to a polynomial ansatz with a quadratic and linear term. We expect a quadratic term similar to the surface code qubit scalings  $n_{\mathrm{surf}} = k d^2$ and $n_{\mathrm{surf\ total}} = k (2d^2 - 1)$  (where $n_{\mathrm{surf}}$ is the number of data qubits needed for $k$ rotated surface code patches and $n_{\mathrm{surf\ total}}$ is the total data and measure qubits needed for $k$ rotated surface code patches). We also expect the number of routing qubits required to be linear in $d$. The asymptopic scaling of the vine code overhead improvement in relation to the surface code $n_{\mathrm{total}} / n_{\mathrm{surf\ total}}$ is given by $C / 2k$.

We find that these vine codes have increasing returns on qubit numbers relative to the surface code as the circuit distance increases. For example, the $\overrightarrow{nSeWwSs}$ code on a rotated patch gives $C / 2k = 0.797$. In \cref{fig:code_patches} and \cref{tab:distance_10_codes}, we corroborate this claim by constructing $d=10$ patches for four boundary geometries (stim circuits for these code patches can be found at Ref.~\onlinecite{nixon_2026_20734084}). The rotated geometry requires $82\%$ of the total qubits required for equivalent surface code patches, a larger improvement than the $d = 7$ patch. We expect the circuit distance of the patches shown in \cref{fig:code_patches} to match the code distances, i.e. $d_{\mathrm{circuit}}=10$, as \cref{tab:performing_boundary_cuts} illustrates that the $\overrightarrow{nSeWwSs}$ code on all boundary geometries explored has no distance-reducing hook errors, given $d = d_{\mathrm{circuit}}$ in each case.

In \cref{tab:performing_boundary_cuts}, we give verified results for all codes where $n_{\mathrm{meas+data}}/n_{\mathrm{surf\ total}} < 1$ i.e. codes that at circuit distance $6$ already use fewer measure and data qubits than the surface code. For these examples, we further present the polynomial fitting parameters in \cref{fig:distance_scaling_appendix} in \appref{app:exhaustive_sequence_search}, finding $11$ examples where $C/2k <1$, suggesting improvements in qubit count over the surface code as the code distance increases. 

Our results indicate that the boundary cut geometry matters significantly for qubit counts. In \cref{tab:performing_boundary_cuts}, we show qubit numbers required for the same vine code on different boundary geometries, which give varying $n$ and $n_{\mathrm{total}}$ to achieve the same circuit distance. We also see that different cuts give varying biases in the length of $v_1$ and $v_2$ to achieve equal circuit distances in both bases. While this does not affect the minimum weight logical size by construction, this may affect the relative number of minimum weight logical representatives between bases. This entropic effect may be seen in circuit-level noise simulations, see \cref{sec:circuit-level_noise_simulations}.

\subsection{Circuit-Level Noise Simulations}
\label{sec:circuit-level_noise_simulations}
\begin{figure*}
    \centering
    \begin{tikzpicture}
  \node[anchor=south west, inner sep=0] (img) at (0,7.5)
    {\includegraphics[height=0.39\linewidth]{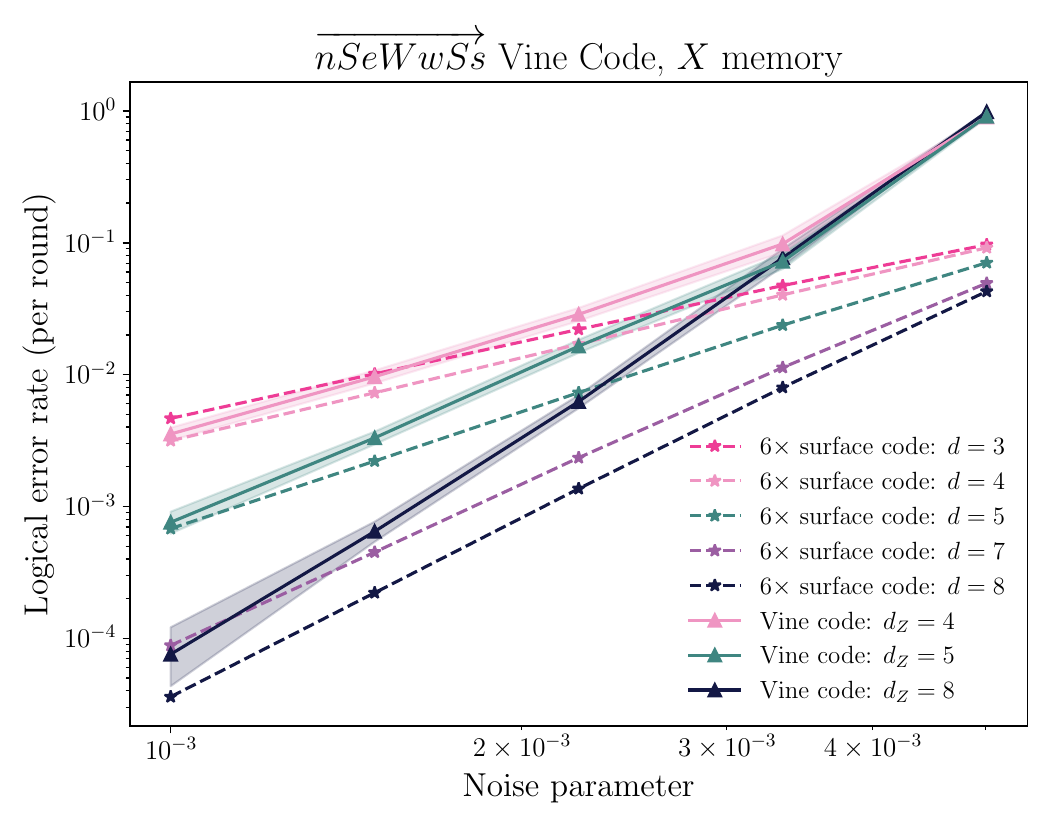}};
\node[anchor=south west, inner sep=0] (img) at (9,7.5)
    {\includegraphics[height=0.39\linewidth]{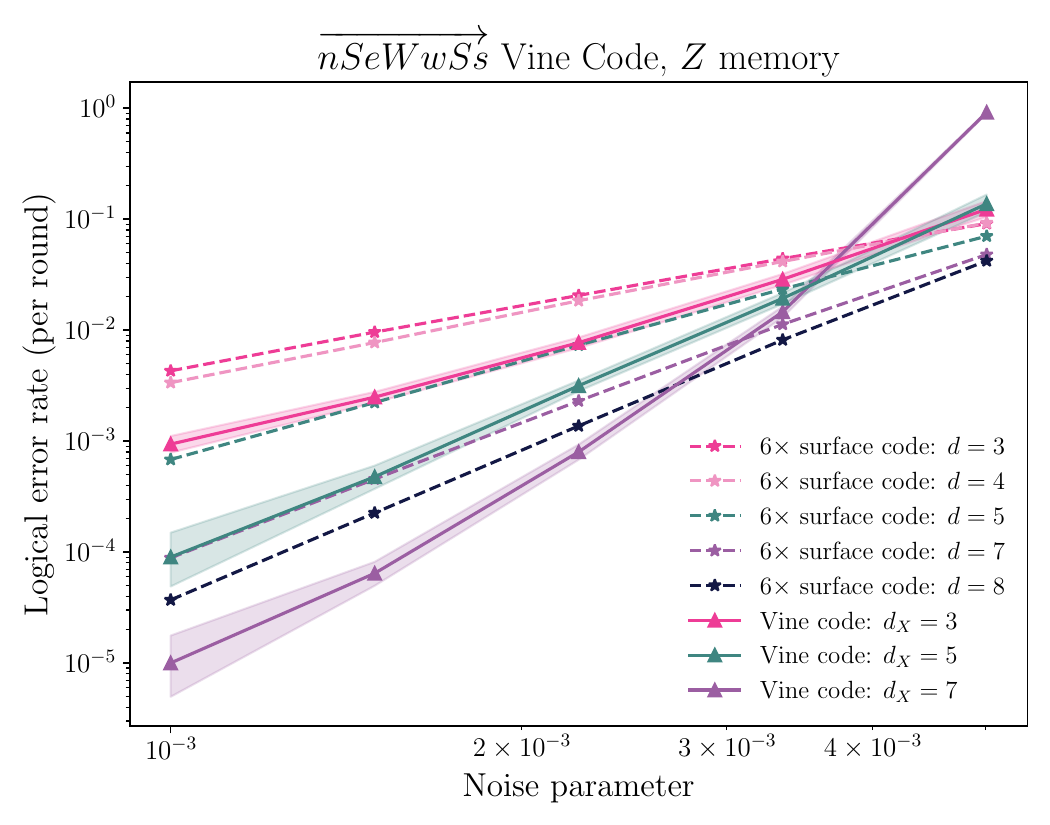}};
\node[anchor=south west, inner sep=0] (img) at (0,0)
    {\includegraphics[height=0.39\linewidth]{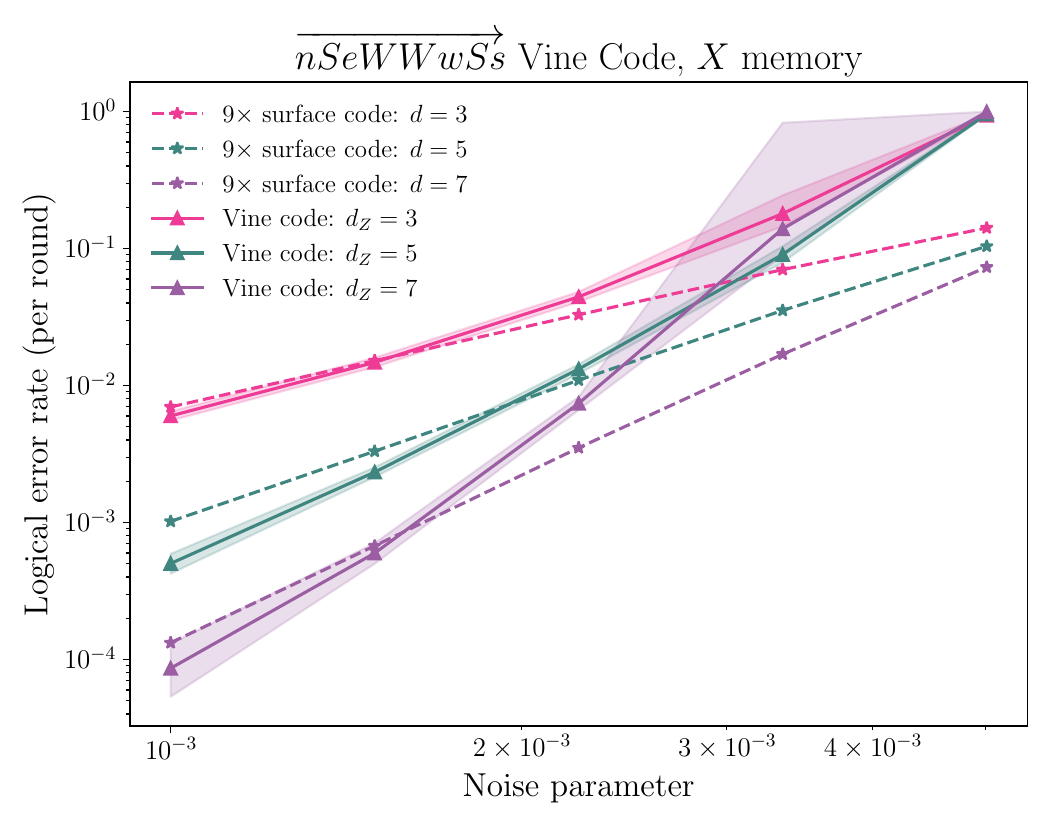}};
\node[anchor=south west, inner sep=0] (img) at (9,0)
    {\includegraphics[height=0.39\linewidth]{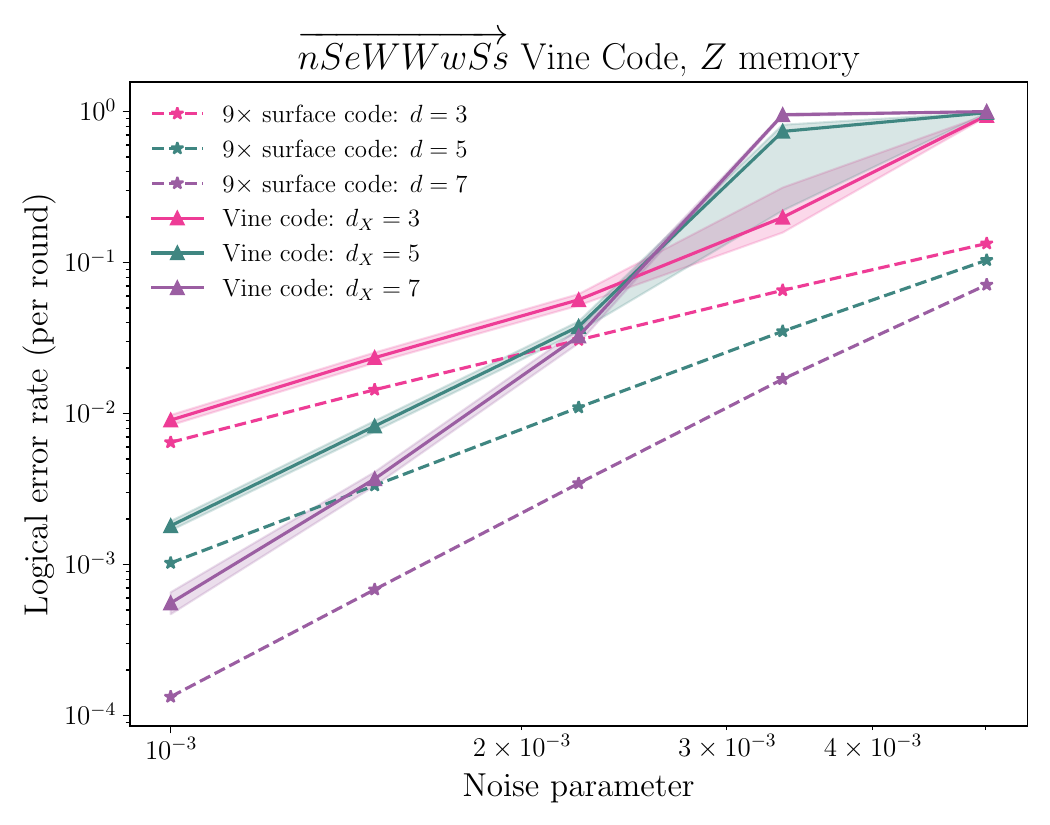}};
    \node[anchor=north west] at (0.5, 14.5) {(a)};
    \node[anchor=north west] at (9.5, 14.5) {(b)};
    \node[anchor=north west] at (0.5, 7.1) {(c)};
    \node[anchor=north west] at (9.5, 7.1) {(d)};
\end{tikzpicture}
\caption{Comparing the performance of a vine code quantum memory circuit encoding $k$ qubits with the performance of $k$ copies of the surface code. We use an SI1000 noise model for both types of circuit~\cite{gidney2023yokedsurfacecodes}, a BPOSD decoder for vine codes~\cite{Roffe_LDPC_Python_tools_2022} and PyMatching for the surface code~\cite{PyMatching2023}. \textbf{(a)}-\textbf{(b)} Memory experiments in $X$ and $Z$ bases are plotted for code patches of the $\overrightarrow{nSeWwSs}$ code. The patches have rotated rectangular vectors $v_1 = [-s,s]$, $v_2 = [2s,2s]$ for $s=5,7,11$, and circuit distances $(d_X, d_Z)$ of $(3,4)$, $(5,5)$ and $(7,8)$, respectively. \textbf{(c)}-\textbf{(d)} Memory experiments are plotted for patches of the $\overrightarrow{nSeWWwSs}$ code, with horizontal parallelogram vectors $v_1 = [s,s]$, $v_2=[s+2,0]$ for $s=10,18,26$. The vectors were chosen in both cases so that the patches had minimum circuit distances 3, 5, and 7. We report the circuit-distance that affects the relevant memory experiment ($d_Z$ for $X$ memory, $d_X$ for $Z$ memory) in each subfigure, though note that the code patches simulated are the same between $X$ and $Z$ memory experiments.\label{fig:sim_results}}
\end{figure*}

We perform numeric benchmarks of some of the vine codes found, and present our results in Fig.~\ref{fig:sim_results}. We find that under realistic assumptions about near-term hardware, vine codes can perform better than the surface code while lowering overheads. This can be seen, for example, in Fig.~\ref{fig:sim_results}(b)-(c).

Our circuits use only hardware-native gates (up to single-qubit Cliffords, which are typically very low noise in real devices) and a realistic, SI1000 noise model for superconducting qubits~\cite{gidney2023yokedsurfacecodes}. We choose patches to simulate with circuit distances $3$, $5$, and $7$, and compare these to surface codes of equivalent distances, in both $X$ and $Z$ simulated quantum memory experiments. See \appref{app:numerics_details} for more details of our numerical simulations.

The $Z$ memory performance of the $\overrightarrow{nSeWwSs}$ vine code is better by almost an order of magnitude than the corresponding surface code patch of the same distance, at a noise rate of $p=10^{-3}$, whereas for $X$ memory the vine code is roughly equivalent, compared to a distance-7 surface code. This is due to some asymmetry in the logical operators of the vine code. For example, the $(d_X=3,d_Z=4)$ patch has only a single logical qubit that has a weight-3 $X$-logical operator, while the other independent $X$ logical strings have weights 4 (for three strings) or 6 (for two strings). Meanwhile, there are six weight-4, independent $Z$-logical strings. Additionally, the asymmetric shape of the code patch provides different entropic factors for $X$ and $Z$ memory experiments. Although some patches simulated here have slightly uneven $d_X$ and $d_Z$, given that our resource comparisons in Section~\ref{sec:qubit_overhead_scaling} compare vine codes to their corresponding $d\times d$ surface code patch, we perform that same comparison here, reporting the results for $d\times d$ surface code memory performance (with SI1000 noise). We scale the per-round logical error rate of a single surface code patch (via the formula $P_k = 1-(1-P_L)^k$) to estimate the logical error rate of $k$ logical qubits encoded in surface codes.  

\subsection{Flip-vine codes}
\label{sec:flip-vine_code_results}

\begin{table}
    \centering
    \renewcommand{\arraystretch}{1.5}
    \begin{tabular}{ccccc}
    \toprule
    \textbf{Vine sequence} & \textbf{Code} & $v_1$ & $v_2$ & $\frac{n_\text{flip}}{n_\text{toric}}$\\
    \midrule
    $\overrightarrow{neSWWSws}$ & $  [[144,16,6]]$ & $[12, 0]$ & $[0, 24]$ & 0.25\\
    \midrule
    $\overrightarrow{nSSEESSs}$  & $[[120,6,10]]$ & $[12,0]$ & $[0,20]$ & 0.2\\
    \midrule
    $\overrightarrow{nSENNESs}$ & $[[72,10,6]]$ & $[12,0]$ & $[0,12]$ & 0.2\\
    \midrule
    $\overrightarrow{nSsEEnSs}$ & $[[24, 8, 4]]$ & $[8, 0]$ & $[6,6]$ & 0.189\\
    \midrule
    $\overrightarrow{nSeWWwSs}$ & $[[32, 12, 4]]$ & $[8,0]$ & $[0,8]$ & 0.167\\
    \midrule
    $\overrightarrow{nSEEEESs}$ & $[[72, 12, 6]]$ & $[12, 0]$ & $[0,12]$ & 0.167
    \end{tabular}
    \caption{Examples of flip-vine codes with periodic boundary conditions that permit transversal single-qubit Clifford gates. We report the step sequence, the code parameters, the vectors $v_1$ and $v_2$ defining the torus, along with the code's data qubit efficiency relative to the toric code, $n_\text{flip}/n_\text{toric}$, where $n_\text{flip}$ ($n_\text{toric}$) is the number of data qubits required for the flip-vine (toric) code encoding of the same number of logical qubits at the same distance. Apart from the first row, we report here all codes found with $n_\text{flip}/n_\text{toric}\leq 0.2$, where we remove duplicates with the same code parameters.}
    \label{tab:flip_vine_table}
\end{table}

We provide here some examples of flip-vine codes found on a grid with periodic boundary conditions. We perform a search through the found valid step sequences up to and including length 8, and build example codes on several tori. These tori are defined with vectors $v_1 = [v,0]$, $v_2=[p,w]$ such that opposite boundaries of the parallelogram defined by these vectors are identified. We search through tori with $v$ and $w$ ranging between $4$ and $12$ and $p$ ranging between $0$ and $v$, and through rectangular tori ($p=0$) with $v=4,\ldots, 12$, $w=4,\ldots, 24$. For each size of code, we check that the wrapping of the infinite plane around the torus sends data qubits to data qubits, and $Z$/$X$ measure qubits to measure qubits of the same type. We ensure that the resulting code has commuting stabilisers. We then check the number of logical qubits and the distance (the $X$ and $Z$ distances are equivalent). In Table~\ref{tab:flip_vine_table}, we also supply the ratio of the number of data qubits in the flip-vine code and the number of data qubits required to encode the same $k$ logical qubits in toric codes.

\section{Conclusion}\label{sec:conclusion}


In this paper, we have introduced vine codes, quantum low-density parity check codes that can significantly reduce the overheads associated with quantum error-correction on a planar square grid. Our work is particularly relevant to superconducting hardware, where the connections between qubits are often restricted to $\leq 4$ nearest neighbours in a plane. Even taking into account the qubits required for routing around the boundary, there exist vine codes encoding $k\geq 6$ logical qubits that require fewer physical qubits than $k$ surface code patches of the same circuit-distance, and which perform equal to or better than the surface code under circuit-level noise. Compare, for example, the $\overrightarrow{nSeWwSs}$ codes in Fig~\ref{fig:sim_results}(a)-(b) to the larger-overhead distance-$3$, $5$ and $7$ surface code patches, at $p\approx 10^{-3}$. The gains in overheads are expected to improve as one goes to higher distance. While a two-dimensional translationally invariant code is unable to perform better asymptotically than having $k=\Theta(1)$ and $d=\Theta(\sqrt{n})$, one can nevertheless save considerable resources at the utility scale (up to $\lesssim 20.3\%$ based on our fits) relative to the surface code without requiring significantly more complex hardware. 

There are many directions still open for further research. We have not investigated in depth the possibilities of generalised boundaries or patches of different shapes (e.g., different parallelograms, triangles, pentagons, etc.). Future work could investigate more exotic boundaries, such as Pauli-$Y$ boundaries (possible, for example, in the 2D color code~\cite{Kesselring_2018_bdrys_twists}), or further generalisations of the coloured boundaries discussed in \cref{sec:gen_bdrys}. Future work could also consider general prescriptions for constructing syndrome extraction circuits for generalised boundaries.

Surface codes have benefited from decades of research, and there exist many techniques that allow for further reductions in overheads or improvements to logical fidelity. These include yoking~\cite{gidney2023yokedsurfacecodes,low2026denserplanarsurfacecode}, leakage-reduction~\cite{Battistel_2021,McEwen_2023_relaxing}, tailoring to biased noise~\cite{BonillaAtaides2021XZZX,Tuckett2019Tailoring}, mitigation of defective components~\cite{Auger_2017,Debroy_2025,wolanski2026automatedcompilationincludingdropouts}, and the use of fast and accurate decoders~\cite{PyMatching2023,Delfosse_2021_union_find,shutty2024efficientnearoptimaldecodingsurface}. Such improvements could be brought to bear on vine codes as well. In particular, it is important to investigate the performance of decoders with improved speeds~\cite{relay_bp,Hillmann_2025,wolanski2025ambiguityclusteringaccurateefficient,koutsioumpas2025colourcodesreachsurface,sahay2026matchingdecoderbivariatebicycle,tan2026generalizedmatchingdecoders2d}. One large avenue for future work is to understand whether it is possible to adapt existing techniques for fault-tolerant logic, such as lattice surgery, magic state cultivation and distillation, to integrate vine codes into a fault-tolerant architecture. Achieving arbitrary Clifford gates or multi-Pauli product measurements between the $k$ logical qubits stored within a single vine code patch (necessary for universal quantum computation) may be challenging without re-introducing large resource overheads or long-range connectivity requirements. An alternative to this could be to construct vine code patches with more exotic boundary conditions storing single logical qubits. These are important directions for future work. Adapting flip-vine codes for magic state cultivation and investigating their circuit-level performance could provide an additional fruitful avenue for further research.

Our work has assumed a square grid layout, but it may be interesting for future work to investigate the tradeoffs associated with scaling the connectivity. For example, the $\overrightarrow{sEEEn}$ (or equivalently, the $\overrightarrow{NEEEN}$) step sequence only requires a hex-grid connectivity~\cite{Geher2025Directional}. Meanwhile, other codes may exist that could achieve higher performance if one allows for a tiling in which all vertices are five- or six-valent. From a theoretical standpoint, it would be fascinating to consider the possibilities of vine codes when embedded on a hyperbolic plane, in which one can expect asymptotically constant-rate and logarithmic distance codes~\cite{Breuckmann_2016_hyperbol,Breuckmann_2017_hyperbol_2,Higgott_2024_Hyperbol_Floquet,Fahimniya_2025}. Meanwhile, generalising the dynamic sequence introduced in the flip-vine code construction to construct novel Floquet codes~\cite{Hastings_2021,Haah_2022_bdrys,Gidney_2021,Gidney_2022,vuillot2021planarfloquetcodes,Davydova_2023,Kesselring_2024_anyon_condensation,Higgott_2024_Hyperbol_Floquet,Fahimniya_2025,Setiawan_2025,McLauchlan_2024,jacoby2026stairwaycodesfloquetifyingbivariate} on a planar square grid could similarly produce interesting resource savings and stabiliser weight reductions.

While we have counted the routing qubits towards the total cost of implementing a vine code patch, it is unclear how fair of an assumption this is. Indeed, surface codes also require a buffer layer of qubits around the boundary if one is to perform arbitrary lattice surgery operations with neighbouring patches~\cite{low2026denserplanarsurfacecode}. It is possible that the routing qubits we have counted in this work could be absorbed into a ``corridor" of qubits that would already be required for lattice surgery. 

A challenge for implementing vine codes on today's devices involves calibrating superconducting hardware for both CZ and iSWAP gates and achieving the high gate fidelities seen in recent years for both gates simultaneously. That being said, many of the vine codes we have investigated do not strictly require both gates to implement. At the cost of potentially introducing more routing qubits, sequences that contain no lower-case letters in their middle can easily be converted to a fully-upper-case sequence which is implementable solely with iSWAP gates. See, for example, some of the sequences in Table~\ref{tab:performing_boundary_cuts}. However, experiments have demonstrated that it is possible to function superconducting devices with iSWAP and CZ gates operating together~\cite{iswap_with_cz}. The field has focused extensively on improving CZ gate performance in particular, and hence there may still be large unrealised gains to be found in the fidelities of iSWAP gates operated together with CZ gates.

\section{Acknowledgments}
We are grateful to Timo Hillmann, Aislin Wells, György Geher, David Byfield, Archibald Ruban, David Long, Andrew Doherty and Stephen Bartlett for helpful and encouraging discussions. 

During the preparation of this manuscript, it was brought to our attention that similar problems were being worked on concurrently by another group. This concurrent work is available at Ref.~\onlinecite{Boren_nearest_neighbour_gates}. We thank Boren Gu and the other authors of this manuscript for coordinating with us to release our works jointly.

GMN and CKM acknowledge support from the Intelligence Advanced Research Projects Activity (IARPA), under the Entangled Logical Qubits program through Cooperative Agreement Number W911NF-23-2-0223.
CCLVR acknowledges support from the University of Sydney Nano Institute through the Taste of Research Award.

The python code produced for this manuscript was written with the assistance of large language models. The ideas and writing are the authors' own.

\appendix

\section{Valid sequences}
\label{app:valid_sequences}
Not all vine code sequences are valid codes. Here, we discuss the four conditions for a valid code: commutativity, independent measurement, topological order, and patch connectivity. 

In \appref{app:exhaustive_sequence_search}, we give all the vine codes (that include at least one CXSWAP gate) up to stabiliser weight $9$ that satisfy the four criteria detailed here. We also ignore sequences with steps that entangle then immediately unentangle, for example any sequence with $\vec{E}$ directly followed by $\vec{w}$, or $\vec{N}$ followed by $\vec{s}$.  We assume the patch has alternating rows of X- and Z-stabilisers, corresponding to layout 1 in Ref.~\cite{Geher2025Directional}, (see also \cref{fig:vine_codes}).

\textbf{1. Commutativity:}
Stabilisers of opposite type must have even overlapping support. 

\textbf{2. Independent measurement:}
 Measure qubits must not be  entangled or ``interleaved" at the measurement step. This requires that, for $X$- and $Z$-stabilisers with common data support, the subset of data qubits entangled with the $X$-measure qubit before the $Z$-measure qubit must be of even size, and vice versa. 

Let us define the set of data qubits $\mathcal{D} = \lbrace q_1, q_2, \ldots, q_n\rbrace$. We then define, for measure qubit $a$, $\mathcal{Q}_a = \lbrace Q_a^i \, | \, i=1,\ldots, w\rbrace$, where $w$ is the length of the step sequence, such that $Q_a^i\in \mathcal{D}$ is the $i^{\text{th}}$ data qubit entangled with measure qubit $a$. A step sequence and layout then defines a code with independently measured stabilisers  if~\cite{Geher2025Directional,Geher_tangling_schedules_2024} the set $\lbrace (i,j)\, |\, Q_a^i = Q_b^j ,\, j > i, \, a\neq b\rbrace$ has even cardinality whenever measure qubits $a$ and $b$ are of different Pauli types. 

Note that we consider vine codes that satisfy this condition on the layout with alternating rows of $X$ and $Z$ stabilisers as in \cref{fig:vine_codes} in our exhaustive search in \appref{app:exhaustive_sequence_search}. However, some vine codes may fail this condition on this layout but succeed on other layouts, for example, $\overrightarrow{NENWSW}$. Theorem 1 of Ref.~\onlinecite{Geher2025Directional} can be used to determine which layouts, if they exist, will satisfy this condition for a given step sequence (see also \appref{app:delta_sets}). It is straightforward to confirm that this Theorem holds for the more general vine code sequences in our paper, despite the inclusion of non-swapping entangling gates.

\textbf{3. Topological order condition:}
Some sequences that satisfy conditions \textbf{1} and \textbf{2} above do not produce topologically ordered codes. In this case, all their codewords are a fixed weight and do not increase with the system size. For example, $\overrightarrow{NENNEN}$, $\overrightarrow{NENNNEN}$, $\overrightarrow{NEENNEEN}$, $\overrightarrow{NENNENNEN}$ and $\overrightarrow{NNENNNENN}$ are codes that have been previously referenced as valid codes \cite{Geher2025Directional, rowshan2026directional}, but fail to have topological order. In some cases, the number of logical qubits also grows with the system size, but these codewords remain fixed and low weight. 

Some codes have a subset of logicals that have topological order, and a subset that remain fixed weight. For example $\overrightarrow{neSWWSws}$, $\overrightarrow{nEENNNEEs}$, $\overrightarrow{nEENENEEs}$, $\overrightarrow{nEENWNEEs}$ and $\overrightarrow{nEEnSsEEs}$. The fixed weight logicals in these cases may be an artifact of the corner boundary conditions we impose in our construction, see \appref{app:corners}, and should be treated with caution. However, we include these examples in \cref{tab:topological_vine_codes_wt1-9}, as some logicals are topological and we may obtain a high-distance code after including the low-weight logicals in the stabiliser group.

\textbf{4. Patch connectivity:} Some vine codes create multiple, independent and disconnected code regions, see \cref{fig:bad_patch_connectivity}. In our exhaustive search, we build vine codes using a square patch with boundaries $v_1 = 30$, and $v_2 = 30$, which initially contains $450$ data qubits. During the boundary formation, where stabilisers are truncated and removed, data qubits that are not supported on a stabiliser of both types around the boundary are removed. While we expect this boundary formation procedure may remove some boundary data qubits, giving a final data qubit number below $450$, a code that creates multiple disconnected code regions will reduce the final data count by approximately half, or even fewer if more than $2$ disconnected regions are created. Therefore, in \cref{tab:topological_vine_codes_wt1-9} we omit codes that resulted in less than $250$ final data qubits (some codes that divide the patch similarly to \cref{fig:bad_patch_connectivity} have just over half the total data qubits in one disconnected region, for example $\overrightarrow{nEewSewEs}$ has $227$ qubits in the largest disconnected region).

\begin{figure}
    \centering
    \includegraphics[width=\linewidth]{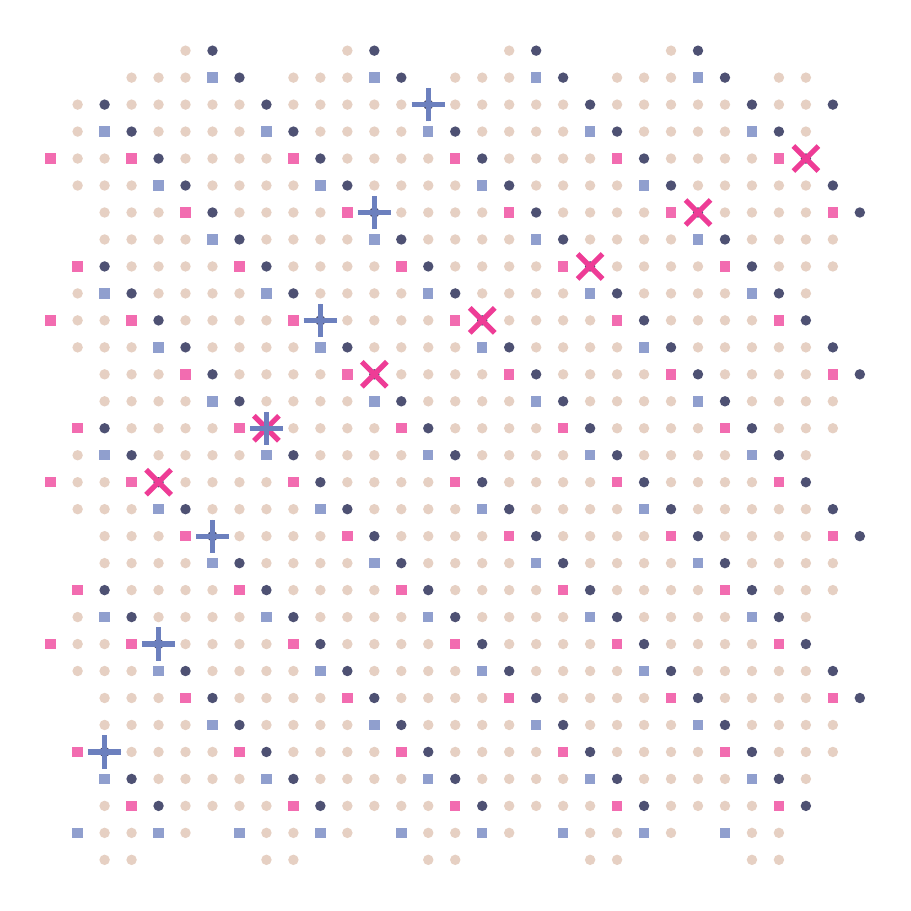}
    \caption{The $\overrightarrow{nENnsNEs}$ code divides the patch into three, approximately equal disconnected regions. Here we show one region, where data qubits are black, $Z$ measure qubits are blue, $X$ measure qubits are pink, and routing qubits are brown. The data and measure qubits of the other two regions are supported on the edge qubits. The support of the $X$ ($Z$) logical operators are shown by the pink crosses (blue crosses). This patch encodes three disconnected logical qubits.}
    \label{fig:bad_patch_connectivity}
\end{figure}

\section{Exhaustive vine code sequence search}
\label{app:exhaustive_sequence_search}

We perform exhaustive searches over all vine sequences up to length $9$. We remove sequences that do not satisfy the four conditions of \appref{app:valid_sequences}.

We also remove sequences based on the following additional criteria:

\textbf{5. $D_4$ sequence symmetry:} We consider vine sequences modulo $D_4$ (reflection and rotation by $\pi / 2$ symmetry). For example, the vine code $\overrightarrow{nEEs}$ is a reflection of the sequence $\overrightarrow{sEEn}$. 

\textbf{6. Begin and end with CX gate:} We remove sequences that begin or end with a CXSWAP gate, i.e. the instruction $N$, $E$, $S$ or $W$. This is because there exists an equivalent vine sequence that begins and ends with a CX gate that has identical data qubit support and entangling order. Consider replacing the CXSWAP that begins a sequence with the CX gate in the opposite cardinal direction, i.e. $\vec{N} \rightarrow \vec{s}$, $\vec{E} \rightarrow \vec{w}$, $ \vec{S} \rightarrow \vec{n}$ and $ \vec{W} \rightarrow \vec{e}$. Also consider replacing the CXSWAP gate that ends a sequence with the CX gate in the same cardinal direction, i.e. $\vec{N} \rightarrow \vec{n}$, $\vec{E} \rightarrow \vec{e}$, $ \vec{S} \rightarrow \vec{s}$ and $ \vec{W} \rightarrow \vec{w}$. These transformations will form an equivalent code, where the first and last swap is avoided. Avoiding fruitless swaps can reduce the number of routing qubits required. For example $\overrightarrow{NEEN}$ is equivalent to $\overrightarrow{sEEn}$.

\textbf{7. $D_4$ support symmetry:} We find codes that generate equivalent data qubit support modulo $D_4$ while entangling with data qubits in a different order. For example, $\overrightarrow{nSsEEnSs}$ and $\overrightarrow{nsENNEsn}$. Filtering sequences with equivalent support symmetry modulo $D_4$ reduces the set of vine codes we generate from $906$ to $147$. We remove sequences with identical support and include only one representative in \cref{tab:topological_vine_codes_wt1-9}. We include all representatives before this reduction in  Ref.~\onlinecite{nixon_2026_20734084}.

Satisfying conditions \textbf{1}-\textbf{4} in \appref{app:valid_sequences} and reducing based on criteria \textbf{5}-\textbf{7} above, we arrive at  $146$ valid vine codes. These are given in \cref{tab:topological_vine_codes_wt1-9}. After reducing based on criteria \textbf{5}-\textbf{7}, we include only one canonical representative for each set of codes with equivalent stabiliser support (prioritising cardinal directions in their clockwise order in the step sequences, beginning with $N$ or $n$).

Importantly, the reduction in step \textbf{7} reduces the number of remaining codes dramatically. Without this reduction, while still reducing based on criteria \textbf{1}-\textbf{6}, we would arrive at $906$ codes. The representative of each code in \cref{tab:topological_vine_codes_wt1-9} may have different hook errors to a vine code that generates equivalent support via a different step sequence. We do not optimise for hook errors when choosing a representative, since the effects of hook errors depend on the final code patch one chooses. This point may be important to explore when further optimising vine codes for performance on near term architectures and is left for future work.


\begin{table*}[!ht]
\centering
\renewcommand{\arraystretch}{1.25}
\begin{tabular}{@{}l@{\;}l@{\hspace{3em}}l@{\;}l@{\hspace{3em}}l@{\;}l@{\hspace{3em}}l@{\;}l@{}}
\multicolumn{8}{c}{\textbf{Weight 4}} \\
\noalign{\vskip 0.15em}
\hline
\noalign{\vskip 0.1em}
$\overrightarrow{nEEs}$ & $(k=1)$ &  &  &  &  &  &  \\
\noalign{\vskip 0.25em}
\hline
\noalign{\vskip 0.45em}
\multicolumn{8}{c}{\textbf{Weight 5}} \\
\noalign{\vskip 0.15em}
\hline
\noalign{\vskip 0.1em}
$\overrightarrow{nEEEs}$ & $(k=2)$ & $\overrightarrow{nENEs}$ & $(k=2)$ &  &  &  &  \\
\noalign{\vskip 0.25em}
\hline
\noalign{\vskip 0.45em}
\multicolumn{8}{c}{\textbf{Weight 6}} \\
\noalign{\vskip 0.15em}
\hline
\noalign{\vskip 0.1em}
$\overrightarrow{neSSws}$ & $(k=3)$ & $\overrightarrow{nEEEEs}$ & $(k=3)$ & $\overrightarrow{nENNEs}$ & $(k=3)$ & $\overrightarrow{nSewSs}$ & $(k=3)$ \\
$\overrightarrow{nSEESs}$ & $(k=3)$ &  &  &  &  &  &  \\
\noalign{\vskip 0.25em}
\hline
\noalign{\vskip 0.45em}
\multicolumn{8}{c}{\textbf{Weight 7}} \\
\noalign{\vskip 0.15em}
\hline
\noalign{\vskip 0.1em}
$\overrightarrow{nEnEsEs}$ & $(k=2)$ & $\overrightarrow{nEnSsEs}$ & $(k=2)$ & $\overrightarrow{nEESEEs}$ & $(k=2)$ & $\overrightarrow{nENNWSe}$ & $(k=2)$ \\
$\overrightarrow{nESWSEs}$ & $(k=2)$ & $\overrightarrow{nSESESs}$ & $(k=2)$ & $\overrightarrow{neSSSws}$ & $(k=4)$ & $\overrightarrow{nEeSwEs}$ & $(k=4)$ \\
$\overrightarrow{nEEEEEs}$ & $(k=4)$ & $\overrightarrow{nEENEEs}$ & $(k=4)$ & $\overrightarrow{nENENEs}$ & $(k=4)$ & $\overrightarrow{nENNNEs}$ & $(k=4)$ \\
$\overrightarrow{nSeSwSs}$ & $(k=4)$ & $\overrightarrow{nSENESs}$ & $(k=4)$ & $\overrightarrow{nSeWwSs}$ & $(k=6)$ & $\overrightarrow{nSEEESs}$ & $(k=6)$ \\
\noalign{\vskip 0.25em}
\hline
\noalign{\vskip 0.45em}
\multicolumn{8}{c}{\textbf{Weight 8}} \\
\noalign{\vskip 0.15em}
\hline
\noalign{\vskip 0.1em}
$\overrightarrow{newSSews}$ & $(k=1)$ & $\overrightarrow{nsEnsEns}$ & $(k=1)$ & $\overrightarrow{nEsNNnEs}$ & $(k=1)$ & $\overrightarrow{nENWNWSe}$ & $(k=1)$ \\
$\overrightarrow{nENWSene}$ & $(k=1)$ & $\overrightarrow{nESsnSEs}$ & $(k=1)$ & $\overrightarrow{nESSSSEs}$ & $(k=1)$ & $\overrightarrow{nSESSESs}$ & $(k=1)$ \\
$\overrightarrow{nEnSsnEs}$ & $(k=2)$ & $\overrightarrow{nsENNEns}$ & $(k=3)$ & $\overrightarrow{nEnSSsEs}$ & $(k=3)$ & $\overrightarrow{nSsEEnSs}$ & $(k=3)$ \\
$\overrightarrow{nSEnsESs}$ & $(k=3)$ & $\overrightarrow{neSSSSws}$ & $(k=5)$ & $\overrightarrow{neSWWSws}$ & $(k=5)$ & $\overrightarrow{nEEEEEEs}$ & $(k=5)$ \\
$\overrightarrow{nENEENEs}$ & $(k=5)$ & $\overrightarrow{nENNNNEs}$ & $(k=5)$ & $\overrightarrow{nESWWSEs}$ & $(k=5)$ & $\overrightarrow{nSeSSwSs}$ & $(k=5)$ \\
$\overrightarrow{nSENNESs}$ & $(k=5)$ & $\overrightarrow{nSSewSSs}$ & $(k=5)$ & $\overrightarrow{nSSEESSs}$ & $(k=5)$ & $\overrightarrow{nSeWweSs}$ & $(k=6)$ \\
$\overrightarrow{nEeSSwEs}$ & $(k=7)$ & $\overrightarrow{nSeWWwSs}$ & $(k=9)$ & $\overrightarrow{nSEEEESs}$ & $(k=9)$ &  &  \\
\noalign{\vskip 0.25em}
\hline
\noalign{\vskip 0.45em}
\multicolumn{8}{c}{\textbf{Weight 9}} \\
\noalign{\vskip 0.15em}
\hline
\noalign{\vskip 0.1em}
$\overrightarrow{newSESews}$ & $(k=2)$ & $\overrightarrow{nsNENENns}$ & $(k=2)$ & $\overrightarrow{nEneSwsEs}$ & $(k=2)$ & $\overrightarrow{nEnESEsEs}$ & $(k=2)$ \\
$\overrightarrow{nEnSWSsEs}$ & $(k=2)$ & $\overrightarrow{nEsNNNnEs}$ & $(k=2)$ & $\overrightarrow{nEEnWsEEs}$ & $(k=2)$ & $\overrightarrow{nEEsNnEEs}$ & $(k=2)$ \\
$\overrightarrow{nEEsWnEEs}$ & $(k=2)$ & $\overrightarrow{nEEESEewe}$ & $(k=2)$ & $\overrightarrow{nEESEEnsn}$ & $(k=2)$ & $\overrightarrow{nEESESEEs}$ & $(k=2)$ \\
$\overrightarrow{nENESEewe}$ & $(k=2)$ & $\overrightarrow{nENNWSwew}$ & $(k=2)$ & $\overrightarrow{nESsNnSEs}$ & $(k=2)$ & $\overrightarrow{nESSSSSEs}$ & $(k=2)$ \\
$\overrightarrow{nSESWSESs}$ & $(k=2)$ & $\overrightarrow{nesWWWnws}$ & $(k=4)$ & $\overrightarrow{neSEEESws}$ & $(k=4)$ & $\overrightarrow{neSSSewes}$ & $(k=4)$ \\
$\overrightarrow{nsENNNEns}$ & $(k=4)$ & $\overrightarrow{nEnEEEsEs}$ & $(k=4)$ & $\overrightarrow{nEnENEsEs}$ & $(k=4)$ & $\overrightarrow{nEnSESsEs}$ & $(k=4)$ \\
$\overrightarrow{nEnSSSsEs}$ & $(k=4)$ & $\overrightarrow{nEeNENwEs}$ & $(k=4)$ & $\overrightarrow{nEsEEEnEs}$ & $(k=4)$ & $\overrightarrow{nEsNENnEs}$ & $(k=4)$ \\
$\overrightarrow{nEEnEsEEs}$ & $(k=4)$ & $\overrightarrow{nEEsEnEEs}$ & $(k=4)$ & $\overrightarrow{nEEEEEnsn}$ & $(k=4)$ & $\overrightarrow{nEEESEEEs}$ & $(k=4)$ \\
$\overrightarrow{nEENEEnsn}$ & $(k=4)$ & $\overrightarrow{nEESSSEEs}$ & $(k=4)$ & $\overrightarrow{nEESWSEEs}$ & $(k=4)$ & $\overrightarrow{nENnEsNEs}$ & $(k=4)$ \\
$\overrightarrow{nENeWwNEs}$ & $(k=4)$ & $\overrightarrow{nENENEnsn}$ & $(k=4)$ & $\overrightarrow{nENESENEs}$ & $(k=4)$ & $\overrightarrow{nENNNEnsn}$ & $(k=4)$ \\
$\overrightarrow{nENNWNNEs}$ & $(k=4)$ & $\overrightarrow{nESeWwSEs}$ & $(k=4)$ & $\overrightarrow{nESsEnSEs}$ & $(k=4)$ & $\overrightarrow{nESwNeSEs}$ & $(k=4)$ \\
$\overrightarrow{nESwSeSEs}$ & $(k=4)$ & $\overrightarrow{nESSWSSEs}$ & $(k=4)$ & $\overrightarrow{nESWNEESs}$ & $(k=4)$ & $\overrightarrow{nSewSewSs}$ & $(k=4)$ \\
$\overrightarrow{nSsENEnSs}$ & $(k=4)$ & $\overrightarrow{nSEeSwESs}$ & $(k=4)$ & $\overrightarrow{nSEsNnESs}$ & $(k=4)$ & $\overrightarrow{nSESESESs}$ & $(k=4)$ \\
$\overrightarrow{nSSeNwSSs}$ & $(k=4)$ & $\overrightarrow{nSSESESSs}$ & $(k=4)$ & $\overrightarrow{neSwEeSws}$ & $(k=6)$ & $\overrightarrow{neSSSSSws}$ & $(k=6)$ \\
$\overrightarrow{neSSWSSws}$ & $(k=6)$ & $\overrightarrow{nsEENEEns}$ & $(k=6)$ & $\overrightarrow{nEeNNNwEs}$ & $(k=6)$ & $\overrightarrow{nEEnSsEEs}$ & $(k=6)$ \\
$\overrightarrow{nEEeSwEEs}$ & $(k=6)$ & $\overrightarrow{nEEEEEEEs}$ & $(k=6)$ & $\overrightarrow{nEEENEEEs}$ & $(k=6)$ & $\overrightarrow{nEENWNEEs}$ & $(k=6)$ \\
$\overrightarrow{nENeNwNEs}$ & $(k=6)$ & $\overrightarrow{nENEEENEs}$ & $(k=6)$ & $\overrightarrow{nENENENEs}$ & $(k=6)$ & $\overrightarrow{nENNENNEs}$ & $(k=6)$ \\
$\overrightarrow{nENNNNNEs}$ & $(k=6)$ & $\overrightarrow{nESwEeSEs}$ & $(k=6)$ & $\overrightarrow{nESWSWSEs}$ & $(k=6)$ & $\overrightarrow{nSesWnwSs}$ & $(k=6)$ \\
$\overrightarrow{nSeSSSwSs}$ & $(k=6)$ & $\overrightarrow{nSsEEEnSs}$ & $(k=6)$ & $\overrightarrow{nSEnEsESs}$ & $(k=6)$ & $\overrightarrow{nSEeWwESs}$ & $(k=6)$ \\
$\overrightarrow{nSEsEnESs}$ & $(k=6)$ & $\overrightarrow{nSENNNESs}$ & $(k=6)$ & $\overrightarrow{nSSeSwSSs}$ & $(k=6)$ & $\overrightarrow{nSSENESSs}$ & $(k=6)$ \\
$\overrightarrow{neSWWWSws}$ & $(k=8)$ & $\overrightarrow{nEeSWSwEs}$ & $(k=8)$ & $\overrightarrow{nEENENEEs}$ & $(k=8)$ & $\overrightarrow{nESWWWSEs}$ & $(k=8)$ \\
$\overrightarrow{nSeSWSwSs}$ & $(k=8)$ & $\overrightarrow{nSeWNWwSs}$ & $(k=8)$ & $\overrightarrow{nSEESEESs}$ & $(k=8)$ & $\overrightarrow{nSENENESs}$ & $(k=8)$ \\
$\overrightarrow{nEeSSSwEs}$ & $(k=10)$ & $\overrightarrow{nEENNNEEs}$ & $(k=10)$ & $\overrightarrow{nSeWSWwSs}$ & $(k=10)$ & $\overrightarrow{nSEENEESs}$ & $(k=10)$ \\
$\overrightarrow{nSSeWwSSs}$ & $(k=10)$ & $\overrightarrow{nSSEEESSs}$ & $(k=10)$ & $\overrightarrow{nSeWWWwSs}$ & $(k=12)$ & $\overrightarrow{nSEEEEESs}$ & $(k=12)$ \\
\noalign{\vskip 0.25em}
\hline
\noalign{\vskip 0.45em}
\end{tabular}
\caption{Topologically ordered vine code sequences up to length 9. We find the above sequences from an exhaustive search of vine codes that satisfy conditions 1-4 in \appref{app:valid_sequences}, and filter for representatives via steps 5-7 in \appref{app:exhaustive_sequence_search}. This search assumes a layout with alternating rows of $X$- and $Z$-stabilisers, see \cref{fig:vine_codes}}
\label{tab:topological_vine_codes_wt1-9}
\end{table*}

\begin{figure*}
    \centering
    \includegraphics[width=\linewidth]{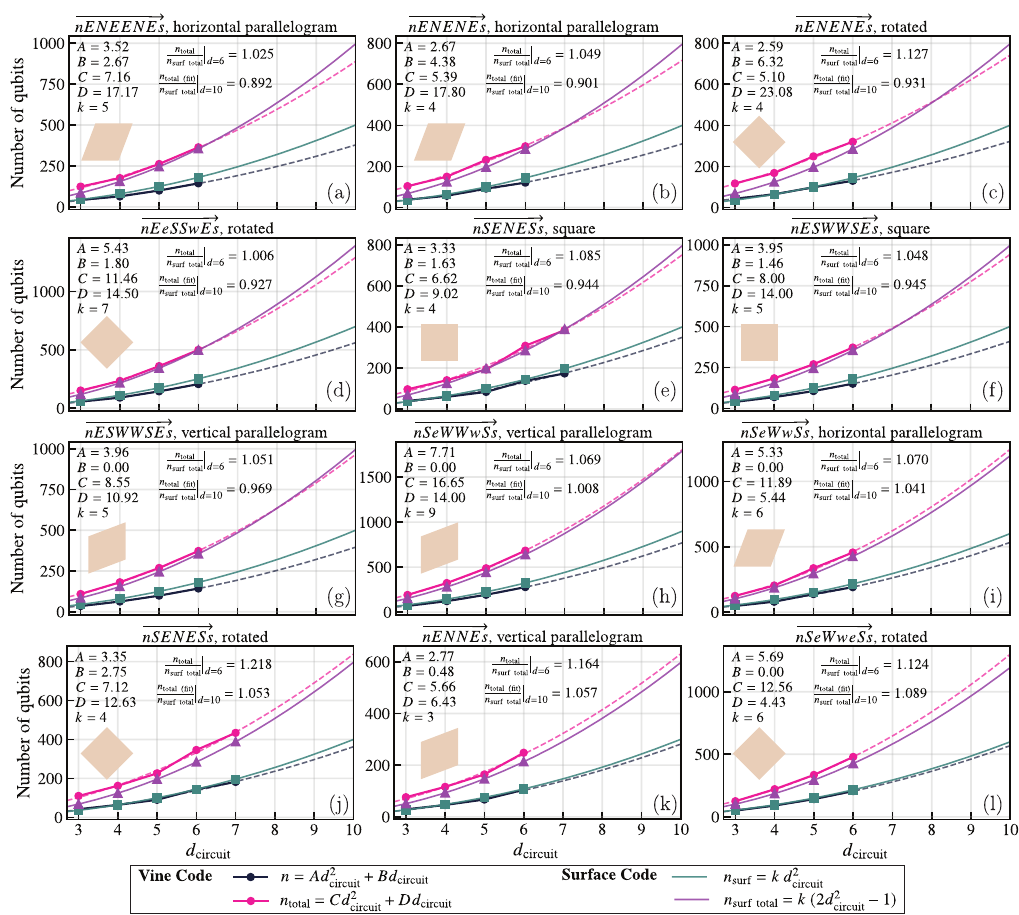}
    \caption{Qubit number versus circuit distance plots for all codes in \cref{tab:performing_boundary_cuts}. The conventions are equivalent to those in \cref{fig:nSeWwSs_results_fig2} and \cref{fig:nSeWWwSs_results_fig3}. For each subfigure, we additionally present the verified ratio, $\left.\frac{n_{\mathrm{total}}}{n_{\mathrm{surf\ total}}}\right|_{d=6}$, of total vine code to surface code qubits at circuit-distance 6, and the predicted ratio, $\left.\frac{n_{\mathrm{total\ (fit)}}}{n_{\mathrm{surf\ total}}}\right|_{d=10}$, at $d_\mathrm{circuit}=10$ obtained from the quadratic fit.}
    \label{fig:distance_scaling_appendix}
\end{figure*}

In \cref{fig:distance_scaling_appendix},  we present the qubit overhead scaling of all codes presented in \cref{tab:performing_boundary_cuts} not already shown in \cref{sec:qubit_overhead_scaling} of the main text. 
For each subfigure in \cref{fig:distance_scaling_appendix}, we create patches of varying circuit distance, using the vine sequence and patch geometry as indicated. The circuit distances $d_{\mathrm{circuit}}$ indicated on the $x$-axis give the distances for both bases. 
These are obtained using the `connected cluster' algorithm in the dist-m4ri library~\cite{pryadko2024distm4ri} (see Ref.~\onlinecite{webster2026distancefindingalgorithmsquantumcodes} for a comparison of exact distance-finding algorithms and implementations).
For each patch, we find the number of data qubits $n$ and total number of qubits including data, measure and routing $n_{\mathrm{total}}$. We fit $n$ and $n_{\mathrm{total}}$ as $d_{\mathrm{circuit}}$ is varied to the ansatz as indicated, giving fit parameters for $A$, $B$, $C$ and $D$ in each subfigure. In each case, we also plot the equivalent number of qubits ($n$ data qubits and $n_{\mathrm{total}}$ data and measure qubits) needed to create $k$ surface code patches of equivalent circuit distance. Here, $k$ is the number of logical qubits encoded by the vine code as indicated. 

If the quadratic fit for $n_{\mathrm{total}}$, $C$, is smaller than $2k$, we expect that the efficiency of the vine code over the surface code to improve for large distances. In each subfigure, we give the verified ratio $\left.\frac{n_{\mathrm{total}}}{n_{\mathrm{surf\ total}}}\right|_{d=6}$ between $n_{\mathrm{total}}$ and the number of surface code qubits required for an equivalent encoding, $n_{\mathrm{surf\ total}}$, and $d_\mathrm{circuit}=6$, and we also provide the predicted ratio, $\left.\frac{n_{\mathrm{total\ (fit)}}}{n_{\mathrm{surf\ total}}}\right|_{d=10}$, at $d_\mathrm{circuit}=10$, obtained from the quadratic fit. The maximum asymptotic value of $\frac{n_{\mathrm{total}}}{n_{\mathrm{surf\ total}}}$ is given by $C/2k$.

\section{Vine code $\Delta$ sets}\label{app:delta_sets}

We define the set $\Delta_\text{odd}$, similarly to Ref.~\onlinecite{Geher2025Directional}. Firstly, define $\overrightarrow{Q_a^i Q_a^j}$ as the displacement vector pointing from qubit $Q_a^i$ to qubit $Q_a^j$, and let $\Delta$ be the multiset (counting multiplicity) of all such displacement vectors $\overrightarrow{Q_a^i Q_a^j}$ for $i<j$ and for any arbitrary measure qubit $a$. Finally, $\Delta_\text{odd}$ is defined as the set of all vectors in $\Delta$ that appear therein an odd number of times, along with the negatives of all such vectors. Two measure qubits separated by a combination of vectors from $\Delta_\text{odd}$ risk violating the stabiliser commutation requirements, at the circuit level. Hence, we require that any two measure qubits in a vine code that are separated by a vector from $\Delta_\text{odd}$ are of the same Pauli type (see Theorem 1 of Ref.~\onlinecite{Geher2025Directional}).

Additional to the two sets above we define $\Delta'$ as the multiset of displacement vectors and their negatives. In other words, it is the multiset of all $\overrightarrow{Q_a^iQ_a^j}$ for $i,j = 1,\ldots, w$. Then let $\Delta'_\text{odd}$ be the set of all vectors appearing in $\Delta'$ an odd number of times. An important fact for the construction of flip-vine codes is that any two stabilisers overlap an odd number of times if and only if the separation of their measure qubits, $\overrightarrow{ab}\in \Delta'_\text{odd}$. We prove this below.

\begin{prop}
    Consider a vine code with a length-$w$ step sequence. For two measure qubits, $a$ and $b$, define their stabiliser data qubit supports as sets $S_a$, $S_b$. Assume that $0\notin \Delta'$ and that $|S_a|=w$ for all measure qubits $a$. Then $\overrightarrow{ab}\in \Delta'_\text{odd}$ if and only if $|S_a\cap S_b|$ is odd. 
\end{prop}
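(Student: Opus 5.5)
The approach is to use translation invariance to write every stabiliser support as a translate of one fixed pattern. Then $|S_a\cap S_b|$ becomes a count of pairs of pattern points differing by $\overrightarrow{ab}$, which is exactly the multiplicity of $\overrightarrow{ab}$ in $\Delta'$.

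First, fix the relative displacements. Because every measure qubit executes the same step sequence, the position of $Q_a^i$ relative to the initial position of $a$ is a fixed vector $r_i$, independent of $a$ (for $i=1,\ldots,w$). So $Q_a^i = a + r_i$. Note that $X$ and $Z$ measure qubits follow identical trajectories; only the control/target roles differ. The assumption $|S_a|=w$ says the $r_i$ are pairwise distinct, so $S_a = \lbrace a+r_i\rbrace_{i=1}^w$ has no repetitions. The assumption $0\notin\Delta'$ says the same thing in displacement form, since $0\in\Delta'$ would require $r_i=r_j$ for some $i\neq j$. Here the diagonal terms $i=j$ are understood to be excluded from $\Delta'$, otherwise $0$ would trivially belong to it. I would state this convention explicitly at the outset.

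Second, count the intersection. A data qubit lies in $S_a\cap S_b$ exactly when $a+r_i=b+r_j$ for some $i,j$, that is, when $r_i-r_j=\overrightarrow{ab}$. Distinctness of the $r_i$ makes the map $(i,j)\mapsto a+r_i$ injective on such pairs: $i$ determines the qubit, and $j$ is then determined by $r_j=r_i-\overrightarrow{ab}$. Hence
\begin{equation}
|S_a\cap S_b| = \#\lbrace (i,j)\,:\, r_i - r_j = \overrightarrow{ab}\rbrace .
\end{equation}
Since $\overrightarrow{Q_a^jQ_a^i}=r_i-r_j$, the right-hand side is precisely the multiplicity of $\overrightarrow{ab}$ in the multiset $\Delta'$ of all ordered-pair displacements.

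Third, conclude by parity. By definition $\overrightarrow{ab}\in\Delta'_{\text{odd}}$ if and only if that multiplicity is odd, which gives the equivalence. The case $a=b$ is excluded, or handled via $0\notin\Delta'$.

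The main obstacle is bookkeeping of the definitions rather than any deep step. I would make sure that the multiset $\Delta'$, described as "displacements and their negatives", counts each ordered pair $(i,j)$ with $i\neq j$ exactly once. Counting unordered pairs and then adding negatives gives the same multiset, and no double-counting occurs. I would also check that $X$- and $Z$-type measure qubits really share the same $r_i$, since the layout only affects types, not trajectories. Finally, at the boundary, truncated stabilisers violate $|S_a|=w$, which is why the hypothesis is needed; the statement is about the infinite-plane code.
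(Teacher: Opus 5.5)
Your proof is correct and is essentially the paper's argument. Both identify points of $S_a\cap S_b$ with index pairs $(i,j)$ whose intra-stabiliser displacement is $\pm\overrightarrow{ab}$, then read the parity off the multiplicity in $\Delta'$; the paper matches $-\overrightarrow{ab}$ and uses that $\Delta'_{\text{odd}}$ is closed under negation, while you also make explicit the translation invariance $Q_a^i=a+r_i$, the injectivity supplied by $|S_a|=w$, and the exclusion of diagonal pairs from $\Delta'$. The one slip is your remark that $a=b$ is ``handled via $0\notin\Delta'$'': there the diagonal pairs give $|S_a\cap S_a|=w$ while $0\notin\Delta'_{\text{odd}}$, so for odd $w$ the equivalence fails and $a\neq b$ must simply be assumed.
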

\begin{proof}
    The set of qubits on which the stabilisers overlap is:
    \begin{align}
        S_a\cap S_b &= \lbrace (i,j)\, |\, \overrightarrow{Q_a^iQ_b^j} = 0\rbrace \\
        &= \lbrace (i,j) \, | \, \overrightarrow{Q_a^iQ_a^j} =- \overrightarrow{ab}\rbrace.
    \end{align}
    But recall that every $\overrightarrow{Q_a^iQ_a^j} \in \Delta'$. Hence, every time a vector $-\overrightarrow{ab}$ appears in $\Delta'$, there is an associated point in the overlap set, corresponding to $(i,j)$. And so $-\overrightarrow{ab}\in \Delta'_\text{odd}\iff \overrightarrow{ab}\in\Delta'_\text{odd}\iff |S_a\cap S_b|$ is odd.
\end{proof}

\section{Corners}
\label{app:corners}
Here, we describe the rules determining how stabilisers of $X$- and $Z$- type are truncated or removed around corners. 

As described in \cref{sec:pauli_bdrys}, Pauli boundaries can be defined by the parellelogram spanned by the vectors $v_1$ and $v_2$. $X$-stabilisers that have data qubit support outside the $Z$-boundary are removed, while $Z$ stabilisers with data qubits in this region are truncated. Correspondingly, $Z$-stabilisers that have data qubit support outside the $X$-boundary are removed, while $X$-stabilisers are truncated.

\cref{fig:corners} illustrates the Pauli boundary types for two vectors $v_1$ and $v_2$ and how the Pauli boundaries are transformed at each corner. At each corner, either $v$ or its additive inverse $-v$ extends from the corner away from the parallelogram, where $v \in \{ v_1, v_2 \}$. We use the angle bisector of these two outward extending rays to define the domain wall where the Pauli boundaries switch type. Following the rule that $X$-type ($Z$-type) stabilisers are removed if they have support in the $Z$ ($X$) boundary domain, we generate a protocol for removing/truncating each stabiliser type near the corners.  The figure highlights two stabiliser supports for the $\overrightarrow{nSeWwSs}$ shown as a vine. The two stabilisers shown have opposite Pauli type, shown by the red highlighed measure qubit stabilisers that are filled in with blue ($Z$-type measure qubit) or pink ($X$-type measure qubit). The $X$-type stabiliser on the lower right is removed because it includes one data qubit that passes into the blue boundary.  However, the $Z$-type stabiliser shown in the top left is truncated to the remaining data support on the parellelogram, as this stabiliser has no support in the pink, $X$ boundary region.

\begin{figure}
    \centering
    \includegraphics[width=\linewidth]{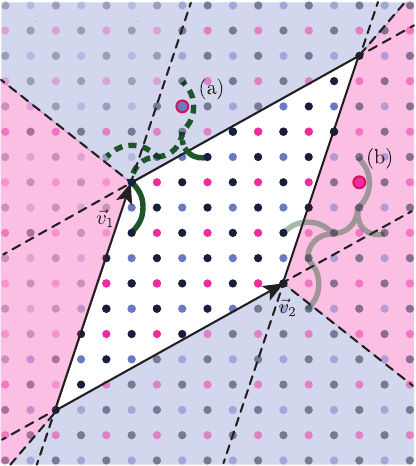}
    \caption{The Pauli boundaries at the corners of the $\overrightarrow{nSeWwSs}$ code, with the patch enclosed by the vectors $v_1$ and $v_2$. At each corner, we can define three rays extending away from the parallelogram: one of $\{ \pm v_1\}$, one of $\{ \pm v_2\}$, and their angular bisector which defines the domain wall between the $X$ boundary (shaded pink region) and the $Z$ boundary (shaed blue region). \textbf{(a)} The vine support of a $Z$-type stabiliser near the corner. The measure qubit's initial location is shown by the blue circle with a red outline. The data qubit support of this stabiliser falls within the parallelogram or $Z$ boundary (shaded blue). Therefore its support is truncated to the parallelogram. \textbf{(b)} The  vine support of an $X$-type stabiliser near the corner, with the measure qubit's initial location shown by the pink circle with a red outline. Here, the lower-most data qubit falls within the blue $Z$ boundary region, therefore this stabiliser is removed. }
    \label{fig:corners}
\end{figure}

\section{Minimising routing qubits and SWAP gates}\label{app:swap_heuristics}

\begin{algorithm}[ht]\label{alg:Long_swaps}
\caption{Delete Long Paths of SWAPS}
\KwIn{Vine code syndrome extraction circuit, $C$}

\KwOut{Valid vine code syndrome extraction circuit, $C'$, with long paths of SWAPs removed}

\SetKwFunction{Segments}{SwapSegments}
\SetKwData{LocalSkips}{LocalSkips}
\SetKwData{Skips}{Skips}
\SetKwData{CanSkip}{QubitCandidateSkips}
\SetKwData{Direction}{Direction}
\SetKwData{Count}{count}
\SetKwData{SwtoSkip}{SwapsToSkip}
\SetKwFunction{Top}{top}

$C' \gets C$\;
\ForEach{data/measure qubit $q$ in $C'$}{
    $\mathcal{S} \gets \Segments{q,\,C'}$ \tcp*{maximal runs of SWAPs acting on $q$}
    $\Skips \gets \emptyset$\;
    \ForEach{segment $S \in \mathcal{S}$}{
        let $M=(m_1,\dots,m_k)$ be the moves of $q$ induced by the SWAPs of $S$, with each $m_i \in \{(0,\pm 1),\,(\pm 1,0)\}$\;
        $\LocalSkips \gets \emptyset$\;
            \ForEach{$\Direction \in \lbrace (0,1),(1,0)\rbrace$}{
            $\Count \gets \min(\#\Direction\in M,\#(-\Direction)\in M)$ \tcp*{Number of pairs of forward-backward moves in $M$ in the direction of $\Direction$}
                \ForEach{$c =1,\ldots, \Count$}{
                    \If{$m_c=\Direction$}{
                        add $m_c$ to $\LocalSkips$\;
                    }
                }
                \ForEach{$c =1,\ldots, \Count$}{
                    \If{$m_c=-\Direction$}{
                        add $m_c$ to $\LocalSkips$ \tcp*{Try to delete equal number of forward and backward moves in the direction of $\Direction$}
                    }
                }
            }
        add $\LocalSkips$ to $\Skips$
    }
    \ForEach{candidate sequence of swaps to skip $\mathcal{C}\in\Skips$}{
    build trial circuit, $C_\text{trial}$, equivalent to $C'$ but with all SWAPs in $\mathcal{C}$ removed\;
    define the set of all non-swap gate locations ((target qubit location, gate layer) or (control qubit location, target qubit location, gate layer)) for $C'$, $\mathcal{N}$, and $C_\text{trial}$, $\mathcal{N}_\text{trial}$\;
    \If{$\mathcal{N}=\mathcal{N}_\text{trial}$}{
    $C'\gets C_\text{trial}$\;
    }
}
}
\Return{$C'$}
\end{algorithm}

\begin{algorithm}\label{alg:swap_meas_reset}
\caption{Delete SWAP-then-measure and reset-then-SWAP events}
\KwIn{Vine code syndrome extraction circuit, $C$}

\KwOut{Valid vine code syndrome extraction circuit, $C'$, with SWAP-then-measure and reset-then-SWAP events deleted}

$C'\gets C$\;
$C_\text{prev}\gets \emptyset$\;

\While{$C'\neq C_\text{prev}$}{
$C_\text{prev} \gets C'$\;
\ForEach{measure qubit $q$ in $C'$}{
\ForEach{SWAP gate, $S$, targeting $q$}{
\If{If there is no gate between $S$ and a measurement/reset of $q$}{
    $C'\gets (C'$ with $S$ deleted and the positions of reset/measurement of $q$ updated)}
}
}
}
\Return{$C'$}
\end{algorithm}

\begin{figure}
    \centering
    \includegraphics[width=\linewidth]{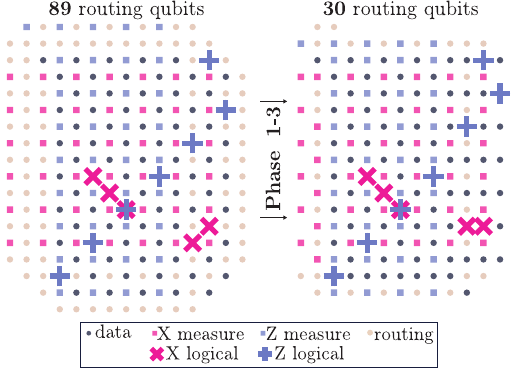}
    \caption{An example of the SWAP reduction techniques applied to a specific, $d=5$ patch of $\overrightarrow{nSENESs}$ code. After these are applied, $59$ redundant routing qubits are deleted. Stim circuits for both code patches can be found at Ref.~\onlinecite{nixon_2026_20734084}.}
    \label{fig:active_routing}
\end{figure}

We use several techniques for constructing circuits for vine codes with boundaries in such a way as to minimise the number of routing qubits and non-entangling SWAP gates required. We begin by constructing the circuit in the way described in the main text. The techniques that reduce the number of SWAPs can then be summarised as follows into three phases:

\textbf{Phase 1: Find valid shorter paths}

If a data/measure qubit undergoes a sequence of SWAPs that moves it between two locations, and a shorter path exists between the same two points, consider moving the data/measure qubit along the shorter path. These shorter paths are added to the circuit in a greedy fashion so long as the resulting circuit is equivalent to the original. We define two circuits as equivalent here if both have identical space-time locations of all non-swap gates (i.e., single-qubit, CX and CXSWAP gates). If this is the case, the circuits are performing the same syndrome extraction procedure and the underlying connectivity is the same, but they differ only in terms of SWAP gates between data/measure qubits and routing qubits. We summarise the procedure in Algorithm~\ref{alg:Long_swaps}.

\textbf{Phase 2: Remove SWAP-then-measure and reset-then-SWAP events}

After Phase 1, there may still be redundant SWAP sequences that span a measure/reset layer. If the SWAPs act on a data qubit, these will be removed already in Phase 1. But if they act on a measure qubit, the SWAPs will not form a contiguous, uninterrupted segment, and will not be removed in Phase 1. Since the syndrome measurement circuits reverse each QEC round, finding these removable SWAPs is equivalent to finding regions of the circuit in which a measure qubit is reset and then SWAPped with a routing qubit, or when a measure qubit is SWAPped with a routing qubit and then measured. In such cases, we can delete the SWAP gate and update the location of the measurement/reset. We must keep iterating through the circuit to look for these events, as chains of SWAPs can exist that will move resets/measurements multiple steps. We summarise the procedure in Algorithm~\ref{alg:swap_meas_reset}.

\textbf{Phase 3: Delete unused routing qubits}

Finally, after removing all SWAPs in the above manner, many routing qubits that were previously necessary may now be redundant. Hence, if any routing qubit is now not acted on by a gate in the circuit, we remove this routing qubit. An example of a vine code patch before and after these three phases are performed is shown in \cref{fig:active_routing}. Note the new positions of data and measure qubits in the updated patch.

\section{Generalised boundary details}\label{app:coloured_bdrys}

\begin{figure*}
    \centering
    \includegraphics[width=0.99\linewidth]{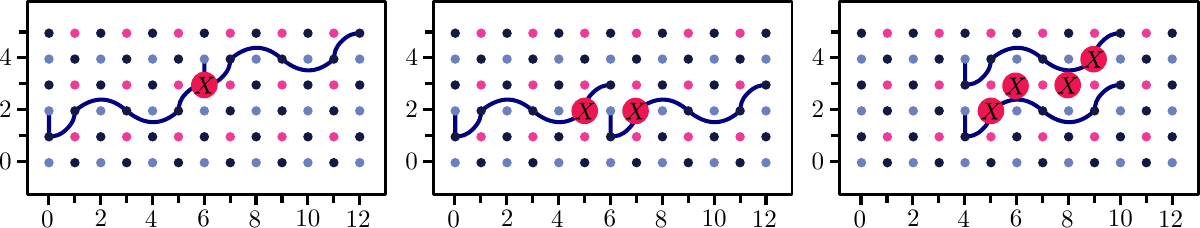}
    \caption{The three minimal-weight tunnelling operator motifs for the $\overrightarrow{sEEEn}$ code. The three Pauli-$X$ operators shown flip, in each case, only the two $Z$ stabilisers highlighted in blue.\label{fig:tun_motifs}}
\end{figure*}

The $\overrightarrow{sEEEn}$ (equivalently $\overrightarrow{NEEEN}$) vine code acts as a useful example to showcase the construction of generalised (or coloured) boundaries in vine codes, as it shares identical topological properties with the 2D color code. This is because it encodes $k=2$ logical qubits on the plane with Pauli boundary conditions.

We briefly recap the anyon content of the color code~\cite{Kesselring_2018_bdrys_twists,Kesselring_2024_anyon_condensation}. Like the color code, the $\overrightarrow{sEEEn}$ code has 9 non-trivial bosons which can be labelled by one of three colours $r$, $g$, $b$, and one of three Pauli labels $x$, $y$, $z$. For our purposes, we will treat $y$ anyons as the combination of $x$ and $z$ anyons. The bosons of the model can be summarised:
\begin{align}
\begin{array}{c|c|c}
r_x & g_x & b_x \\\hline
r_y & g_y & b_y \\\hline
r_z & g_z & b_z
\end{array}
\end{align}
We can identify these with the bosons of two copies of toric code~\cite{Kesselring_2018_bdrys_twists}: $r_x \mapsto e_1$, $r_z \mapsto m_2$, $b_x\mapsto e_2$, $b_z\mapsto m_1$, and these generate all other bosons via fusion.
The Lagrangian subgroups correspond to the rows and columns of the boson table, as each specifies three bosons that braid trivially. By measuring the short tunnelling operators of bosons in a row (column), we condense the complete Lagrangian subgroup of a Pauli (colour) type into the trivial phase. The remaining bosons are confined at this interface. This enables us to define six different boundary types for the $\overrightarrow{sEEEn}$ code. We have introduced two of these in Section~\ref{sec:pauli_bdrys} (corresponding to the first and third rows of the table), but we also define three more here, corresponding to the columns of the table. 

As explained in Section~\ref{sec:gen_bdrys}, to construct a coloured boundary of colour $c$, we measure out the tunnelling operators of $c_x$ and $c_z$ anyons. For the $\overrightarrow{sEEEn}$ code, there are three tunnelling operator motifs, which are shown in Fig.~\ref{fig:tun_motifs}. In the figure, we show the tunnelling operators for $b_z$ anyons and highlight the blue $Z$ stabilisers that are flipped by the $X$-type tunnelling operators. Horizontal translations of these motifs, and vertical translations by $[0,\pm 2]$, result in tunnelling operators for $b_z$, $g_z$ or $r_z$ anyons. Translations by $[1,1]$, combined with a Hadamard gate, create tunnelling operators for $r_x$ anyons. We once again obtain tunnelling operators for $g_x$ and $b_x$ via horizontal translations and translations by $[0,\pm 2]$. 

We construct a small patch of $\overrightarrow{sEEEn}$ code with coloured boundaries by measuring out tunnelling operators for $b$ anyons to the left and right of the patch, and then measure out $r$ anyon tunnelling operators to the top and bottom of the patch. This creates the stabiliser pattern shown in Fig.~\ref{fig:cld_bdry_ckts}, upon truncating the stabilisers that commute with all the measured tunnelling operators to lie within the patch. Note that this procedure can result in stabilisers that ``bleed" out of the patch and overlap with tunnelling operators. We can circumvent this issue by subsequently measuring out the qubits in the trivial region again in the $X$ or $Z$ basis. We then have some freedom to choose a complete set of stabiliser generators near the boundary to measure (e.g., we could measure $s_1$ and $s_2$, or $s_1$ and $s_1s_2$). The specific boundary stabilisers shown in Fig.~\ref{fig:cld_bdry_ckts} were chosen to minimise the number of new shapes generated (as these new shapes require changes to the syndrome measurement circuit; see below). We highlight the truncated stabilisers in the figure, and colour the bulk stabilisers in muted tones. We have shifted the starting positions of some of the data and measure qubits in this patch to allow for the measurement of all the truncated stabilisers without increasing the depth of the syndrome extraction circuit. In particular, the $X$ measure qubits on rows $0$ and $2$ have been moved in the direction of vector $[1,-1]$, and the data qubits directly above these, at $(2,3)$ and $(2,1)$ have been moved along vector $[-1, 1]$. 

\begin{figure*}
    \centering
    \includegraphics[width=0.99\linewidth]{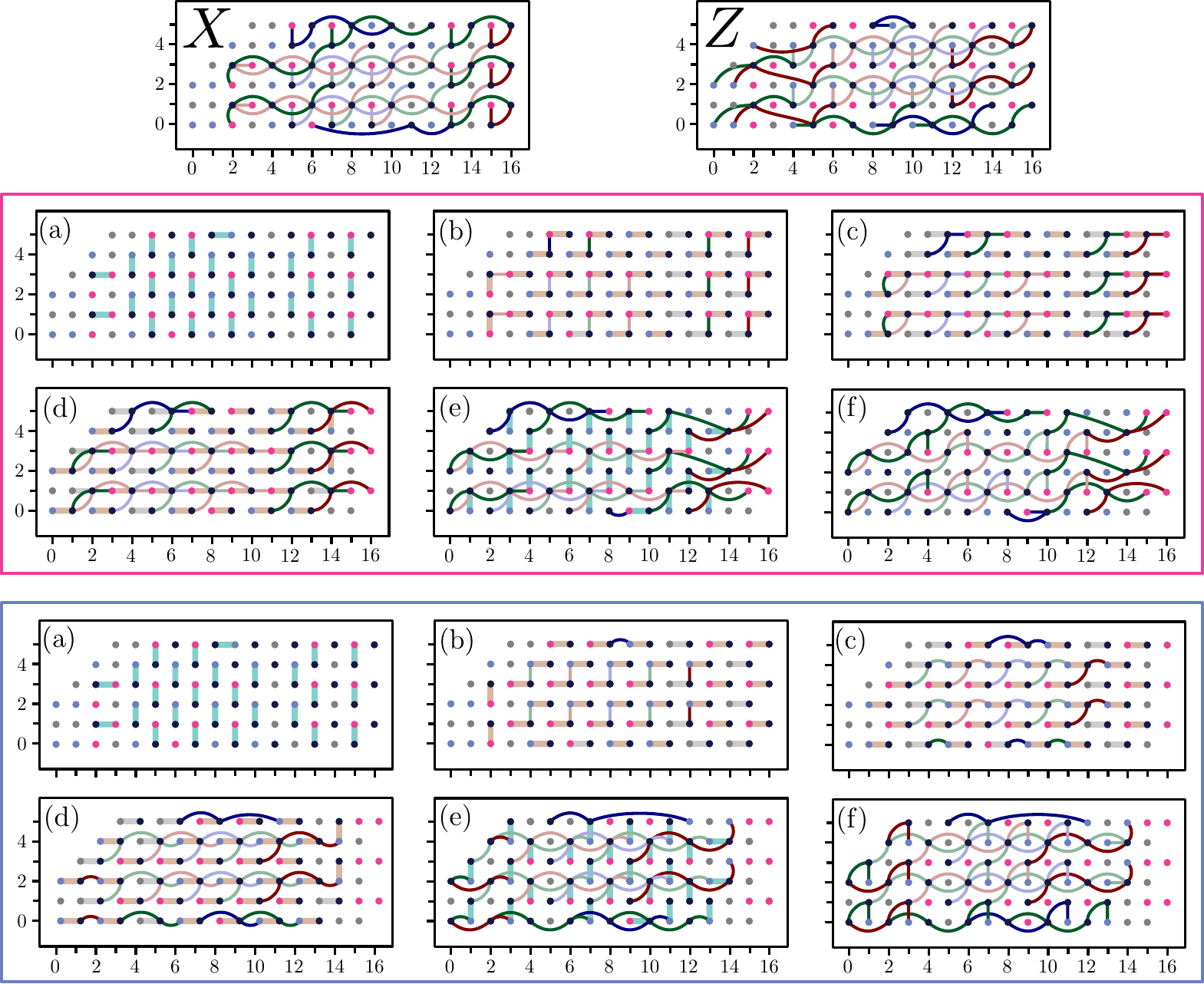}
    \caption{Syndrome extraction circuits for the $\overrightarrow{sEEEn}$ code with vertical blue and horizontal red boundaries (see Fig.~\ref{fig:coloured_bdry_stabs} for the boundaries and stabilisers of each colour). \textbf{(Top)} We show the complete $X$ (left) and $Z$ (right) stabilisers for this patch, with all data/measure qubits in their starting positions. \textbf{(Pink box; (a)-(f))} We show each step of the circuit and track how $X$ stabilisers are constructed. Teal (brown) lines indicate CX (CXSWAP) gates between data and measure qubits. Grey lines are SWAP gates. \textbf{(Blue box; (a)-(f))} The same steps of the circuit are shown and the evolution of the $Z$ stabilisers is shown.\label{fig:cld_bdry_ckts}}
\end{figure*}

Owing to some of the boundary stabilisers having a different shape to the bulk stabilisers, we need to change the syndrome measurement circuit close to the boundaries. In particular, note the different shapes of the four-body red $Z$ (green $X$) boundary stabilisers on the right (left) side of the patch, and the two-body blue stabilisers on the top and bottom boundaries of the patch. These require alterations to the direction of gates in the syndrome extraction circuit, while the remaining boundary stabilisers are simple truncations of the shapes of bulk stabilisers. In Fig.~\ref{fig:cld_bdry_ckts}, we provide all the gates for the syndrome measurement circuit and show how the $X$ and $Z$ stabilisers are formed over the course of the circuit. We choose locations for the measure qubits for the different-shaped stabilisers such that routing qubits would otherwise occupy these positions.

To see that the circuits do not violate the commutativity condition, we merely need to check that the different-shaped stabilisers are measured in a compatible fashion with their neighbours (all regular-shaped stabilisers or regular truncations are trivial, owing to the validity of the step sequence). We can focus on the right and bottom boundaries, as the top and left boundaries are the same for the sequence traced out in reverse. One can check that these stabilisers are compatible with all neighbours of the opposite Pauli type with which they overlap.

We provide a further example of the $\overrightarrow{sENWSw}$ flip-vine code, which is also equivalent to the 2D color code. We show this code on a triangular patch with red, green and blue boundaries. The stabilisers are supported on plaquettes that host $X$ and $Z$ type stabilisers. These are shown in Fig.~\ref{fig:flip_vine_ckt}. The circuit construction is far simpler for this example, as all boundary stabilisers are simple truncations of their bulk counterparts. In Fig.~\ref{fig:flip_vine_ckt}, we trace the evolution of one of the stabilisers supported on a blue plaquette. 

\begin{figure*}
    \centering
\begin{tikzpicture}
    \node[anchor=south west, inner sep=0] (img) at (0,0)
    {\includegraphics[width=0.99\linewidth]{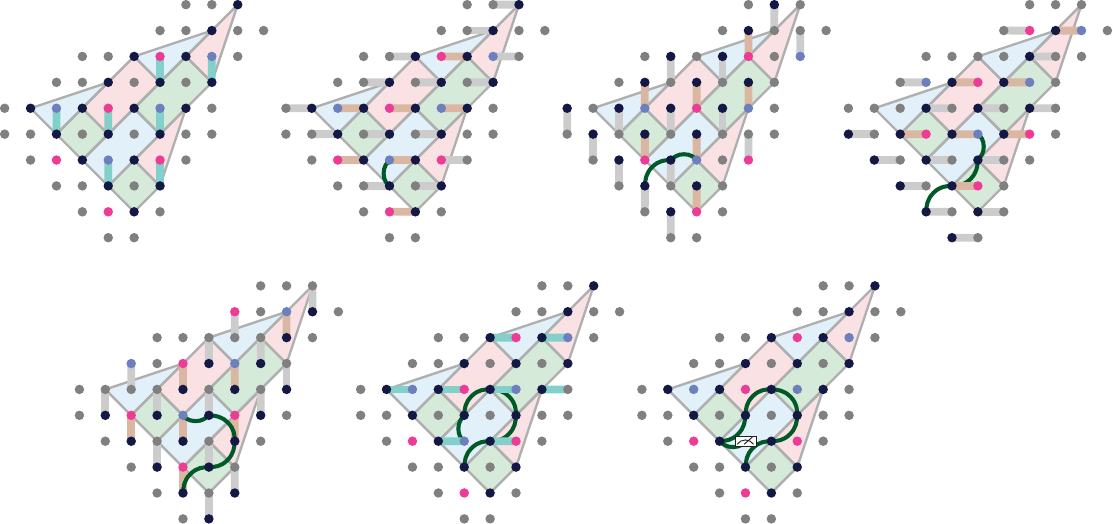}};
    \node[anchor=north west] at (0.65, 8.5) {(a)};
    \node[anchor=north west] at (5.0, 8.5) {(b)};
    \node[anchor=north west] at (9.35, 8.5) {(c)};
    \node[anchor=north west] at (13.7, 8.5) {(d)};
    \node[anchor=north west] at (1.75, 4.0) {(d)};
    \node[anchor=north west] at (6.3, 4.0) {(e)};
    \node[anchor=north west] at (10.75, 4.0) {(f)};
\end{tikzpicture}
    \caption{Syndrome measurement circuit for the $\overrightarrow{sENWSw}$ flip-vine code on a triangular patch. One $Z$ stabiliser is traced out through the steps of the circuit (a)-(e). The same measure qubit will measure an $X$ stabiliser with the same support in the next QEC round. The gate and qubit colour conventions are the same as in \cref{fig:vine_codes}, \cref{fig:boundary_stabilisers}, and \cref{fig:cld_bdry_ckts}.\label{fig:flip_vine_ckt}}
\end{figure*}

\section{Numerical Simulation Details}\label{app:numerics_details}

For our simulations, we use the Stim~\cite{Stim2021} package to build syndrome extraction circuits for both $X$ and $Z$ memory experiments (all stim circuits used for simulations can be found at Ref.~\onlinecite{nixon_2026_20734084}). For $X$ ($Z$) memory experiments, we start by resetting all data qubits in the $X$ ($Z$) basis and all measure and routing qubits in the $Z$ basis. In each QEC round, syndrome extraction is performed, and stabiliser measurements are compared with the last measurement record for that stabiliser to form ``detectors" (parity bits equivalent to the sum mod 2 of the two measurement outcomes). Detectors are $+1$ in the absence of errors and so are used to determine when errors have occurred. Circuits are run for $\max (5,d_\text{circuit})$ rounds of syndrome measurement, where $d_\text{circuit}$ is the minimum circuit distance of the code patch (either $3$, $5$, or $7$). At the end of the experiment, data qubits are read out in the $X$ ($Z$) basis for $X$ ($Z$) memory experiments.

Syndrome extraction circuits are compiled into a native gate set of: measurements in the $Z$ basis, resets in the $Z$ basis, Hadamard gates, CXSWAP, CZSWAP and CX gates (apart from in $X$ memory experiments wherein initial/final data qubit reset and readout is performed in the $X$ basis). The latter three gates are equivalent to gates native in superconducting devices (iSWAP and CZ) up to single-qubit Clifford rotations. Routing qubits are reset and measured in the $Z$ basis at the beginning and end of each QEC round and CZSWAP gates are used to route data/measure qubits via these routing qubits. We include a flag detector in the circuit for each routing qubit measurement. 

We assume a superconducting-inspired ``SI1000" noise model with noise strength $p$. This applies depolarising noise channels to qubits after each gate layer. Two-qubit gates are followed by a correlated two-qubit depolarising channel with strength $p$ and Hadamard gates by a single-qubit channel with strength $p/10$. Resets in the $Z$ ($X$) basis are followed by an $X$ ($Z$) error channel of strength $2p$. Measurements separately suffer a classical readout error with probability $5p$ and are followed by a depolarising channel of strength $p$. When a qubit idles during a unitary gate layer (measurement/reset layer), it suffers a depolarising channel of strength $p/10$ ($2p$).

We take $\geq 100,000$ Monte Carlo shots for each quantum memory circuit (or 1000 errors), collected using the Sinter tool in the Stim library. We use belief propagation with ordered statistics decoding (BPOSD) to decode the data~\cite{Panteleev_2021_BPOSD,Roffe_LDPC_Python_tools_2022,roffe_decoding_2020}. We only include detectors of the same Pauli type as the basis of the memory experiment in the decoding problem. This improved speeds of decoding considerably without noticeably affecting accuracy. For BPOSD, we use the same parameters as used in Ref.~\onlinecite{Geher2025Directional}. We use the min-sum method of BP and a parallel schedule, and use OSD-CS. The number of BP iterations and the OSD order are both set to 15, and the min-sum scaling factor is set to 0.625.

\bibliography{refs.bib}

\end{document}